\documentclass[ejs]{imsart}

% will be filled by editor:
\doi{10.1214/154957804100000000}
\pubyear{0000}
\volume{0}
\firstpage{1}
\lastpage{0}
%\arxiv{}

% \usepackage[margin=3cm,footskip=1cm]{geometry}
\usepackage[round]{natbib}
\usepackage[T1]{fontenc}
\usepackage[utf8]{inputenc}
\usepackage[raggedright]{titlesec}
\usepackage[all]{onlyamsmath}
\usepackage{amsmath,amssymb,amsthm,bm,mathtools,xspace,booktabs,url,graphicx,etoolbox,multirow,subfigure,float,soul,siunitx,hyperref}
\usepackage{color}

\numberwithin{equation}{section} 
\newtheorem{theorem}{Theorem}[section]

\newtheorem{proposition}[theorem]{Proposition}
\newtheorem{definition}[theorem]{Definition}

\DeclareMathOperator\E{E}
\DeclareMathOperator\PP{P}

\newcommand{\dd}{\,\mathrm d}

\newcommand{\R}{\mathbb{R}}

\newcommand{\bx}{\mathbf{x}}

\newcommand{\bX}{\mathbf{X}}
\newcommand{\bY}{\mathbf{Y}}
\newcommand{\bZ}{\mathbf{Z}}

\newcommand{\PS}{\mathrm{PS}}
\newcommand{\DPP}{\mathrm{DPP}}
\newcommand{\sPS}{{\textsc{\scriptsize ps}}}
\newcommand{\sDPP}{{\textsc{\scriptsize dpp}}}
\newcommand{\lPS}{\lambda_{\textsc{\scriptsize ps}}}
\newcommand{\lDPP}{\lambda_{\textsc{\scriptsize   dpp}}}

\graphicspath{{.}{fig/}}

\begin{document}

\begin{frontmatter}

\title{Approximation intensity for pairwise interaction Gibbs point processes using determinantal point processes}  
\runtitle{Intensity approximation using DPP}

\begin{aug}

\author{\fnms{Jean-Fran\c cois} \snm{Coeurjolly}\thanksref{a,b}\ead[label=e1]{coeurjolly.jean-francois@uqam.ca}\ead[label=e2]{jean-francois.coeurjolly@univ-grenoble-alpes.fr}}
\and
\author{\fnms{Frédéric} \snm{Lavancier}\thanksref{c}\ead[label=e3]{frederic.lavancier@univ-nantes.fr}}

\address[a]{Department of Mathematics, Universit\'e du Qu\'ebec \`a Montr\'eal (UQAM), Canada \\ \printead{e1}}
\address[b]{Laboratory Jean Kuntzmann, Universit\'e Grenoble Alpes, CNRS, France
 \\ \printead{e2}}
\address[c]{Laboratoire de Math\'ematiques Jean Leray - Universit\'e de Nantes, France \\ \printead{e3}}
\runauthor{Coeurjolly and Lavancier}

\end{aug}

\begin{abstract}
The intensity of a Gibbs point process is usually an intractable function of the model parameters. For repulsive pairwise interaction point processes, this intensity can be expressed as the Laplace transform of some particular function. 
\citet{baddeley:nair:12} developped the Poisson-saddlepoint approximation which consists, for basic models, in calculating this Laplace transform with respect to a homogeneous Poisson point process. In this paper, we develop an approximation which  consists in calculating the same Laplace transform with respect to a specific determinantal point process. This new approximation is efficiently implemented and turns out to be more accurate than the Poisson-saddlepoint approximation, as demonstrated by some  numerical examples. 
\end{abstract}

\begin{keyword}[class=MSC]
\kwd[Primary: ]{60G55}
\kwd{}
\kwd[secondary: ]{82B21}
\end{keyword}

\begin{keyword}
\kwd{Determinantal point process; Georgii-Nguyen-Zessin formula; Gibbs point process; Laplace transform}
\end{keyword}

% history:
\received{\smonth{1} \syear{0000}}

\tableofcontents

\end{frontmatter}

\section{Introduction}\label{sec:intro}

Due to their simple interpretation, Gibbs point processes and in particular pairwise interaction point processes play a central role in the analysis of spatial point patterns (see \citet{lieshout:00,moeller:waagepetersen:04,baddeley:rubak:turner:15}). In a nutshell, such models (in the homogeneous case) are defined in a bounded domain by a density with respect to the unit rate Poisson point process which takes the form
\[
  f(\bx) \propto \beta^{|\bx|}\prod_{u\in \bx}g(v-u), 
\]
where $\bx$ is a finite configuration of points, where $\beta>0$ represents the activity parameter, $|\bx|$ is the number of elements of $\bx$ and where $g:\R^d \to \R^+$ is the pairwise interaction function. 

However, many important theoretical properties of these models are in general intractable, like for instance the simplest one, the intensity $\lambda\in \R^+$, representing  the mean number of points per unit volume. It is known (see e.g. Section~\ref{sec:gibbs}) that 
\[
  \lambda = \beta \E\left( \prod_{u\in \bx} g(u)\right).
\]
Such an expectation is in general intractable. As clearly outlined by \citet{baddeley:nair:12}, this intractability constitutes a severe drawback. For example, simulating a Gibbs point process with a prescribed value of $\lambda$ cannot be done beforehand even for simple models such as Strauss models.
\citet{baddeley:nair:12} suggest to evaluate the expectation with respect to a homogeneous Poisson point process with intensity $\lambda$. This results in the Poisson-saddlepoint approximation, denoted by $\lPS$, obtained as the solution of
\[\log\lPS = \log\beta -\lPS  \, G\]
where $G = \int_{\R^d}(1-g(u))\dd u$ (provided this integral is finite). 

The general idea of the present paper is to evaluate the same expectaction with respect to a determinantal point process (with intensity $\lambda$). Determinantal point processes (DPP), see e.g.~\citet{lavancier:moeller:rubak:15}, are  a class of repulsive models which is more tractable than Gibbs models. For example all moments are explicit. If $g\leq 1$ and has a finite range $R>0$, our approximation denoted by $\lDPP$ is the solution of
\[
  \log\lDPP  = \log\beta + (1+\lDPP G/\kappa )  \log  \left(1- \frac{\lDPP G} {1+\lDPP G/\kappa}\right).
\]
where 
\[\kappa = \max \left(\frac{|B(0,\delta)|}{\int (1-g)^2} , \frac{\int (1-g)^2}{|B(0,R)|}\right),
\]
$|A|$ denotes the volume of some bounded domain $A\subset \R^d$, $B(0,\rho)$ is the Euclidean ball centered at 0 with radius $\rho$ and $\delta\geq 0$ is some possible hard-core distance.

Both approximations $\lDPP$ and $\lPS$  can be obtained very quickly with a unit-root search algorithm. Figure~\ref{fig:straussintro} reports $\lDPP$ and $\lPS$ as well as the true intensity $\lambda$ (obtained by Monte-Carlo simulations) for Strauss models in terms of the interaction parameter $\gamma_1 \in [0,1]$. This setting is considered by \citet{baddeley:nair:12}.
The DPP approximation outperforms the Poisson-saddlepoint approximation especially when $\gamma_1$ is close to zero, i.e. for very repulsive point processes.
More numerical illustrations are displayed in Section~\ref{sec:simu}. 

\begin{figure}[h]
\subfigure[Strauss: $R=0.05$, $\beta=100$]{\includegraphics[scale=.4]{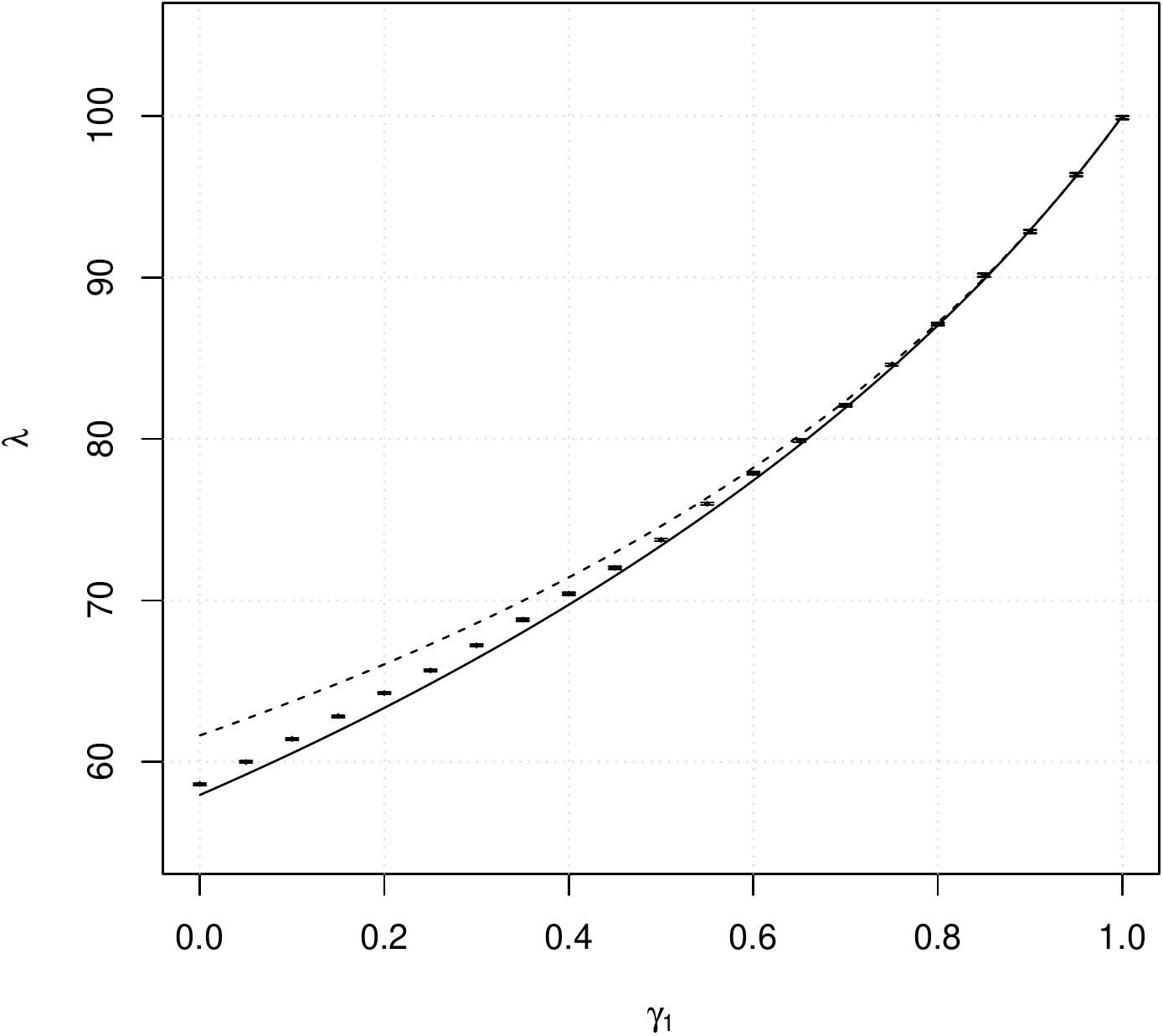}}
\subfigure[Strauss: $R=0.1$, $\beta=100$]{\includegraphics[scale=.4]{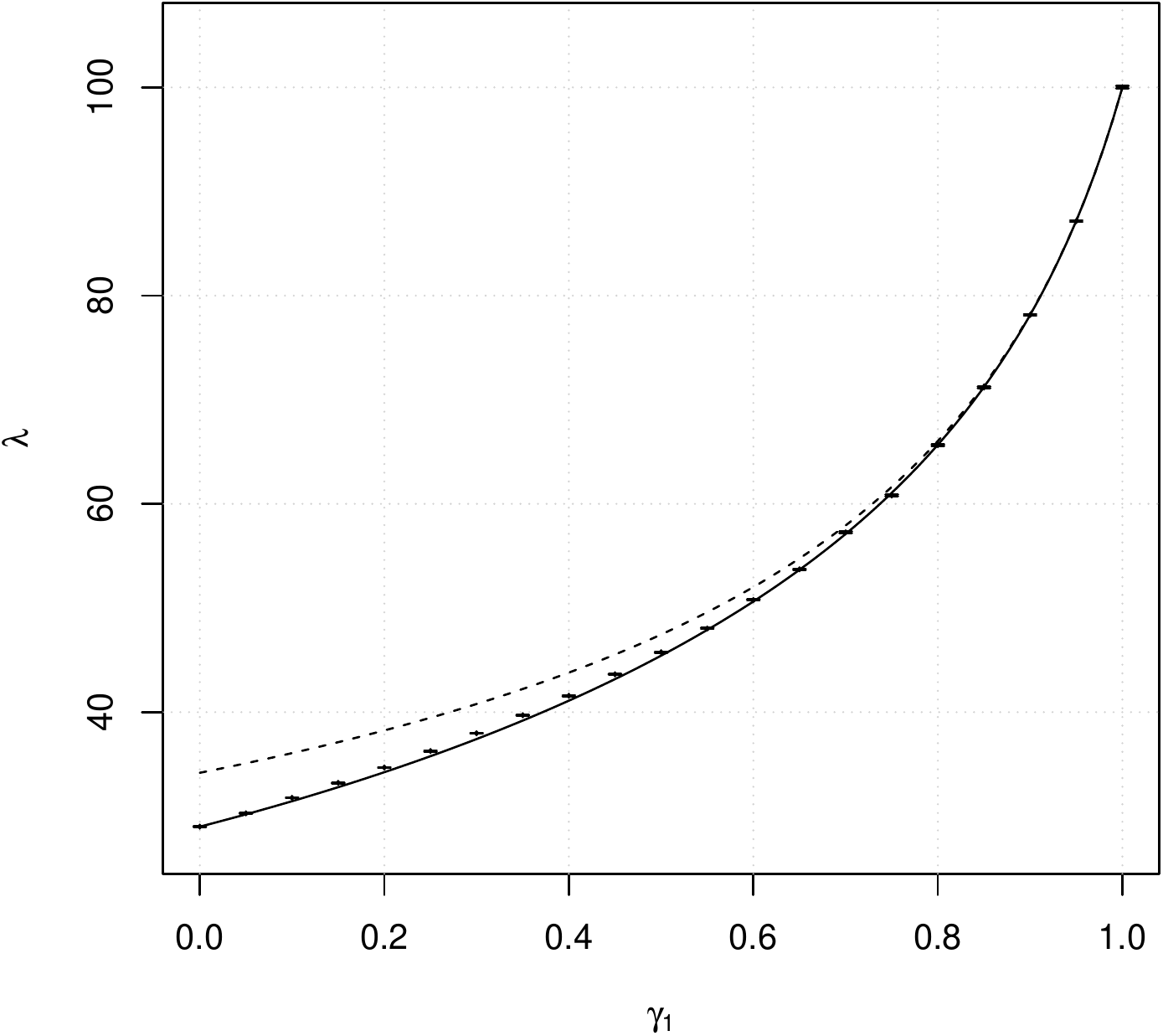}} 
\caption{Comparison of the exact intensity (small boxplots), the Poisson-saddlepoint approximation (dashed line) and the DPP approximation  (solid line) for homogeneous Strauss models with activity parameter $\beta$ and range of interaction $R$. Curves and boxplots are reported in terms of the interaction parameter $\gamma_1 \in [0,1]$. 
\label{fig:straussintro} }
\end{figure}

The rest of the paper is organized as follows. Section~\ref{sec:background} provides necessary notation and background material on point processes, Gibbs point processes and determinantal point processes. Intensity approximations are discussed in detail in Section~\ref{sec:approximation}. Finally, Section~\ref{sec:simu} presents numerical experiments for several classes of pairwise interaction point processes.

\section{Gibbs point processes and determinantal point processes} \label{sec:background}

\subsection{Background and Poisson point processes} \label{sec:poisson}

For $d\geq1$, let $\bX$ be a spatial point process defined on $\R^d$, which we see as a random locally finite subset of $\R^d$. Local finiteness of $\mathbf X$ means that 
$\mathbf X_B= \bX \cap B$ is finite almost surely (a.s.), that is the number of points $N(B)$ of $\bX_B$ is finite a.s., whenever
$B \subset \R^d$ is bounded. We let ${\cal N}$ stand for the state space consisting of 
the locally finite subsets (or point configurations) of $\R^d$. Let $\mathcal B(\R^d)$ denote the class of bounded Borel sets in~$\R^d$. For any $B\in \mathcal B(\R^d)$, we denote by $|B|$ its Lebesgue measure. A realization of $\bX_B$ is of the form
$\bx=\{x_1,\dots,x_m\}\subset B$ for some nonnegative finite integer $m$ and we sometimes denotes its cardinal by $|\bx|$. For further details about point processes, we refer to~\cite{daley:vere-jones:03} and \cite{moeller:waagepetersen:04}.  

A spatial point process is said to have an $n$th order intensity function $\rho^{(n)}$ if for any nonnegative measurable function $h: (\R^d)^n \to \R^+ $, the following formula referred to as Campbell-Mecke formula holds
\begin{align}
  \E \!\sum_{u_1,\dots,u_n \in \bX}^{\neq} h(u_1,\dots,u_n) 
= &\!\!\! \int_{\R^d} \!\dots \!\int_{\R^d}  h(u_1,\dots,u_n ) \rho^{(n)}(u_1,\dots,u_n) \dd u_1 \dots \dd u_n ,\label{eq:campbell}
\end{align}
where the sign $\neq$ over the sum means that $u_1,\dots,u_n$ are pairwise distinct. Then, \linebreak
$\rho^{(n)}(u_1,\ldots,u_n)\,\mathrm du_1\cdots\,\mathrm du_n$ can be
interpreted as the approximate probability for $\bX$ having a point in
each of infinitesimally small regions around $u_1,\ldots,u_n$ of
volumes $\mathrm du_1,\ldots\,\mathrm du_n$, respectively. We also write $\rho(u)$ for the intensity function $\rho^{(1)}(u)$. A spatial point process $\bX$ in $\R^d$ is said to be stationary (respectively isotropic) if its distribution is invariant under translations (respectively under rotations). When $\bX$ is stationary, the intensity function reduces to a constant denoted by $\lambda$ in the rest of this paper. As a matter of fact, $\lambda$ measures the mean number of points per unit volume. 

The Poisson point process, often defined as follows (see e.g. \cite{moeller:waagepetersen:04}), serves as the reference model.
\begin{definition}\label{Def Poisson}
Let $\rho$ be a locally integrable function on $S$, for $S\subseteq \R^d$. A point process $\bX$ satisfying the following statements is called the Poisson point process on $S$ with intensity function $\rho$:
\begin{itemize}
  \item for any $m\geq 1$, and for any disjoint and bounded $B_1,\dots,B_m  \subset S$, the random variables $\bX_{B_1},\dots,\bX_{B_m}$ are independent;
  \item $N(B)$ follows a Poisson distribution with parameter $\int_B \rho(u) \dd u$ for any bounded $B \subset S$.
\end{itemize}
\end{definition}
Among the many properties of Poisson point processes, it is to be noticed that the $n$th order intensity function writes $\rho^{(n)}(u_1,\dots,u_n) = \prod_{i=1}^n \rho(u_i)$, for any pairwise distinct $u_1,\dots,u_n \in S$.

Let $\bZ$ be a unit rate Poisson point process on $S$, which means that its intensity is constant and equal to one. Assume, first, that $S$ is bounded ($|S|<\infty$). We say that a spatial point process $\bX$ has a density $f$ if the distribution of $\bX$ is absolutely continuous with respect to the one of $\bZ$ and with density $f$. Thus, for any nonnegative measurable function $h$ defined on $\mathcal N$, $\E h(\bX) = \E(f(\bZ)h(\bZ))$.  Now, suppose that $f$ is {\it hereditary}, i.e., for any pairwise
distinct $u_0,u_1,\ldots,u_n\in S$, 
$f(\{u_1,\ldots,u_n\})>0$ whenever $f(\{u_0,u_1,\ldots,u_n\})>0$. 
We can then define the so-called {\it Papangelou conditional intensity} by 
\begin{equation}\label{eq:papangelou}
\lambda(u,\bx)={f(\bx \cup u)}/{f(\bx)}
\end{equation}
for any $u\in S$ and
$\bx \in\mathcal N$, setting
$0/0=0$. By the interpretation of $f$, $\lambda(u,\bx)\dd u$ can
be considered as the conditional probability of observing
one event in  a small ball, say $B$, centered at $u$ with volume $\dd u$, given that $\bX$ outside $B$ agrees with $\bx$. When $f$ is hereditary, there is a one-to-one correspondence between $f$ and $\lambda$.

Because the notion of density for $\bZ$  when $S=\R^d$ makes no sense, the Papangelou conditional intensity cannot be defined through a ratio of densities in $\R^d$. But it still makes sense as the Papangelou conditional intensity can actually be defined at the Radon-Nykodym derivative of $\PP_{u}^!$ the reduced Palm distribution of $\bX$ with respect to $\PP$, the distribution of $\bX$ (see \citet{daley:vere-jones:03}). We do not want to enter in too much detail here and prefer to refer the interested reader to \cite{coeurjolly:moeller:waagepetersen:17b}. 

Finally, we mention the celebrated Georgii-Nguyen-Zessin formula 
\citep[see][]{georgii:76,nguyen:zessin:79}, which states that for any $h:\R^d \times \mathcal N\to \R$ (such that the following expectations are finite)
 \begin{equation}
    \E \sum_{u \in \bX} h(u, \bX\setminus u) =  \int_{\R^d}\E \left(h(u,\bX) \lambda(u,\bX) \right)\dd u.\label{eq:GNZ}
 \end{equation}
By identification of~\eqref{eq:campbell} and~\eqref{eq:GNZ}, we see a link between the intensity function of a point process and the  Papangelou conditional intensity: for any $u\in \R^d$
\[
  \rho (u) = \E \left( \lambda(u, \bX)\right),
\]
which in the stationary case  reduces to 
\begin{equation} \label{eq:intensityGeneral}
  \lambda = \E \left( \lambda(0,\bX) \right).  
\end{equation}

\subsection{Gibbs point processes} \label{sec:gibbs}

For a recent and detailed presentation, we refer to \cite{dereudre:17}. Gibbs processes are characterized by an energy function $H$ (or Hamiltonian) that maps any finite point configuration to $\mathbb R \cup \{\infty\}$. Specifically, if $|S|<\infty$, a Gibbs point process on $S$ associated to $H$ and with activity $\beta>0$ admits the following density with respect to the unit rate Poisson process:
\begin{equation}\label{density Gibbs}
 f(\bx)  \propto  \beta^{|\bx|} e^{-H(\bx)},
\end{equation} 
where $\propto$ means ``proportional to''. This definition makes sense under some regularity conditions on $H$, typically non degeneracy ($H(\emptyset)<\infty$) and stability (there exists $A\in\mathbb R$ such that $H(\bx)\geq A |\bx|$ for any $\bx \in \mathcal N$). Consequently, configurations $\bx$ having a small energy $H(\bx)$ are more likely to be generated by a  Gibbs point process than by a Poisson point process, and conversely for configurations having  a high energy. In the extreme case where $H(\bx)=\infty$, then $\bx$ cannot, almost surely, be the realization of a Gibbs point process associated to $H$.

In this paper, we focus on pairwise interaction point processes. To be close to the original paper by \cite{baddeley:nair:12} the present contribution is based on, we use their notation: a Gibbs point process in $S$ is said to be a pairwise interaction point process with pairwise interaction function $g:\R^d \to \R^+$, if its density writes
\[
   f(\bx) \propto \beta^{|\bx|} \prod_{u,v \in \bx} g(u-v).
 \] 
If $|S|=\infty$, this definition and more generally Definition~\eqref{density Gibbs} do not make sense since $H(\bx)$ can be infinite or even undefined if $|\bx|=\infty$. In this case, Gibbs point processes have to be defined via their conditional specifications and for pairwise interactions Gibbs point processes, restrictions on $g$ have to be imposed for existence (see again \cite{dereudre:17} and the references therein for details). Nonetheless, as mentioned in the previous section, the concept of Papangelou conditional intensity applies whenever $|S|<\infty$ or $|S|=\infty$, and in either case it has the explicit form 
\begin{equation} \label{eq:papangelouPIPP}
  \lambda(u,\bx) = \beta \prod_{v\in \bx}  g(u-v),  
\end{equation}
for any $u \in S$. Note that when $S=\R^d$, a pairwise interaction Gibbs point process is stationary  if $g$ is symmetric and it is further  isotropic if $g(u-v)$ depends simply on $\|u-v\|$.

From~\eqref{eq:intensityGeneral}, we deduce that the intensity parameter of a stationary pairwise interaction process writes
\begin{equation}\label{eq:intensityGibbs}
  \lambda = \E \left( \lambda(0, \bX) \right) = \beta \E \left( \prod_{v\in \bx}  g(v) \right).
\end{equation}

Let us give a few examples (which are in particular well-defined in $\R^d$). Many other examples can be found e.g. in the recent monograph by \cite{baddeley:rubak:turner:15}.

\begin{itemize}
  \item {\it Strauss model}: let $\gamma \in [0,1]$ and $0<R<\infty$
  \begin{equation}\label{g_Strauss}
    g(u) = \left\{ \begin{array}{ll}
    \gamma & \mbox{ if } \|u\| \leq R  \\
    1 & \mbox{ otherwise.}
    \end{array} \right.
    \end{equation}
  \item {\it Strauss Hard-core model}:  let $\gamma \in \R^+$ and $0<\delta <R<\infty$
  \[
    g(u) = \left\{ \begin{array}{ll}
    0 & \mbox{ if } \|u\|< \delta\\
    \gamma & \mbox{ if } \delta\leq \|u\| \leq R  \\
    1 & \mbox{ otherwise.}
    \end{array} \right.
  \]
  \item {\it Piecewise Strauss Hard-core model}: 
  \[
    g(u) = \left\{ \begin{array}{ll}
    0 & \mbox{ if } \|u\|< \delta \\
    \gamma_i & \mbox{ if } R_i\leq \|u\| \leq R_{i+1},\, i=1,\dots,I  \\
    1 & \mbox{ otherwise,}
    \end{array} \right.
  \]
  where $I\geq 1$, $0\leq R_1=\delta<R_2<\dots<R_{I+1}=R<\infty$ and $\gamma_1,\dots,\gamma_I \in \R^+$ if $\delta>0$, otherwise $\gamma_1,\dots,\gamma_I \in [0,1]$.
  \item {\it Diggle-Graton model}: let $\gamma \in [0,1]$
  \[
    g(u) = \left\{ \begin{array}{ll}
      \left(\frac{\|u\|}{R}\right)^{1/\gamma} &\mbox{ if } \|u\| \leq R  \\
    1 & \mbox{ otherwise,}
    \end{array} \right.
  \]
  where for $t\in (0,1)$, $t^{\infty}=0$ and $1^\infty=1$ by convention. 
  % \item {\it Lennard-Jones ptential}: let $\theta_1,\theta_2\geq 0$
  % \[
  %   -\log g(u) = \frac{\theta_1}{\|u\|^{12}} - \frac{\theta_2}{\|u\|^{6}}
  % \]
\end{itemize}

Let us note that a Strauss model with $\gamma=0$ and radius $R$ is actually a hard-core model with radius $R$. The Diggle-Graton potential can be found in \cite{baddeley:rubak:turner:15} in a slightly different parameterization. The one chosen here makes comparisons with the Strauss model easier. For instance, when $\gamma=0$ the model reduces to a Strauss model with $\gamma=0$ and radius $R$. When $\gamma=1$, the function $g$ grows linearly from 0 to 1. Figure~\ref{fig:models} depicts the form of some of the pairwise interaction functions presented above.

\begin{figure}[h]
\subfigure[Strauss model]{\includegraphics[scale=.28]{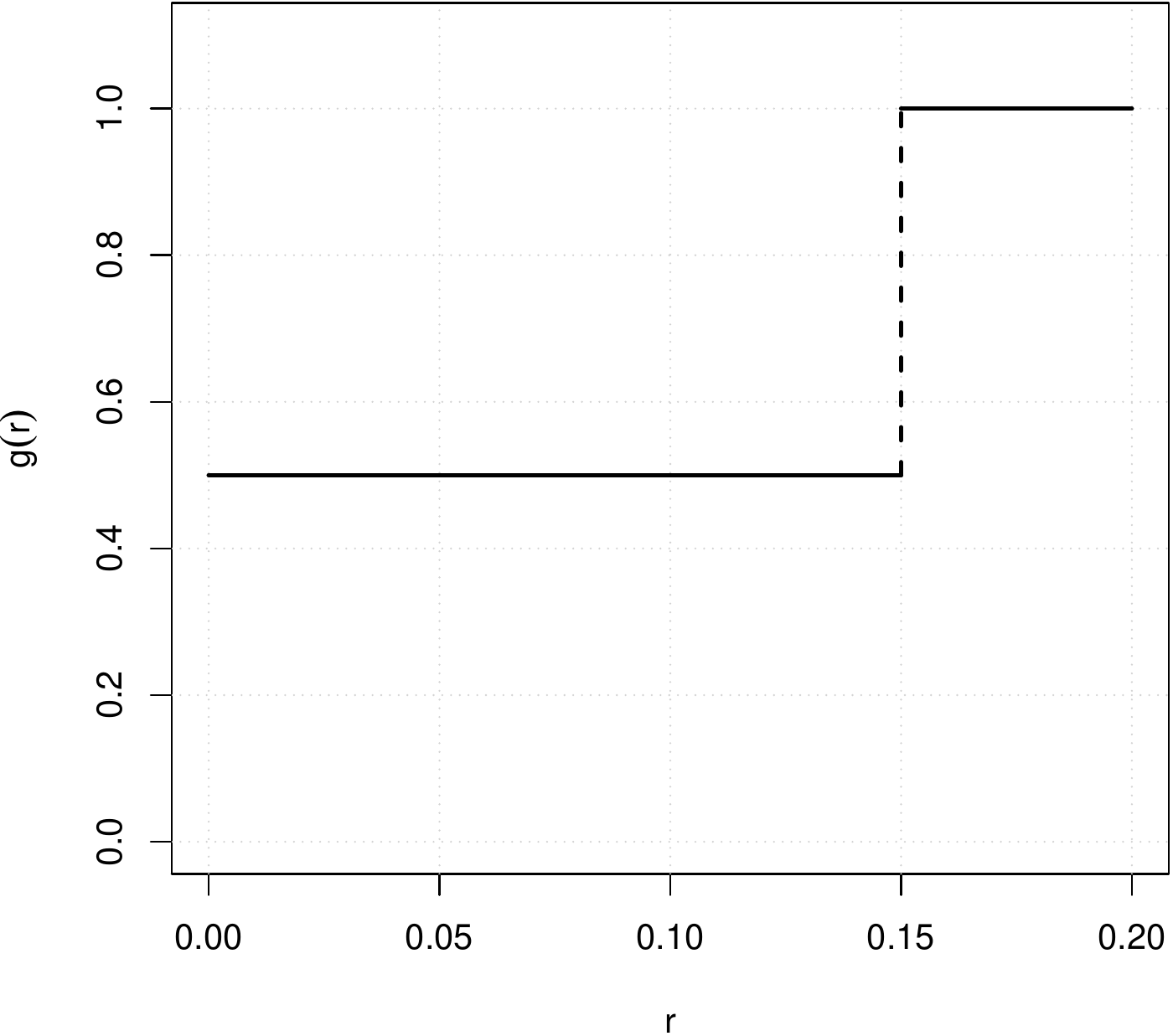}}
\subfigure[Piecewise Strauss hard-core model]{\includegraphics[scale=.28]{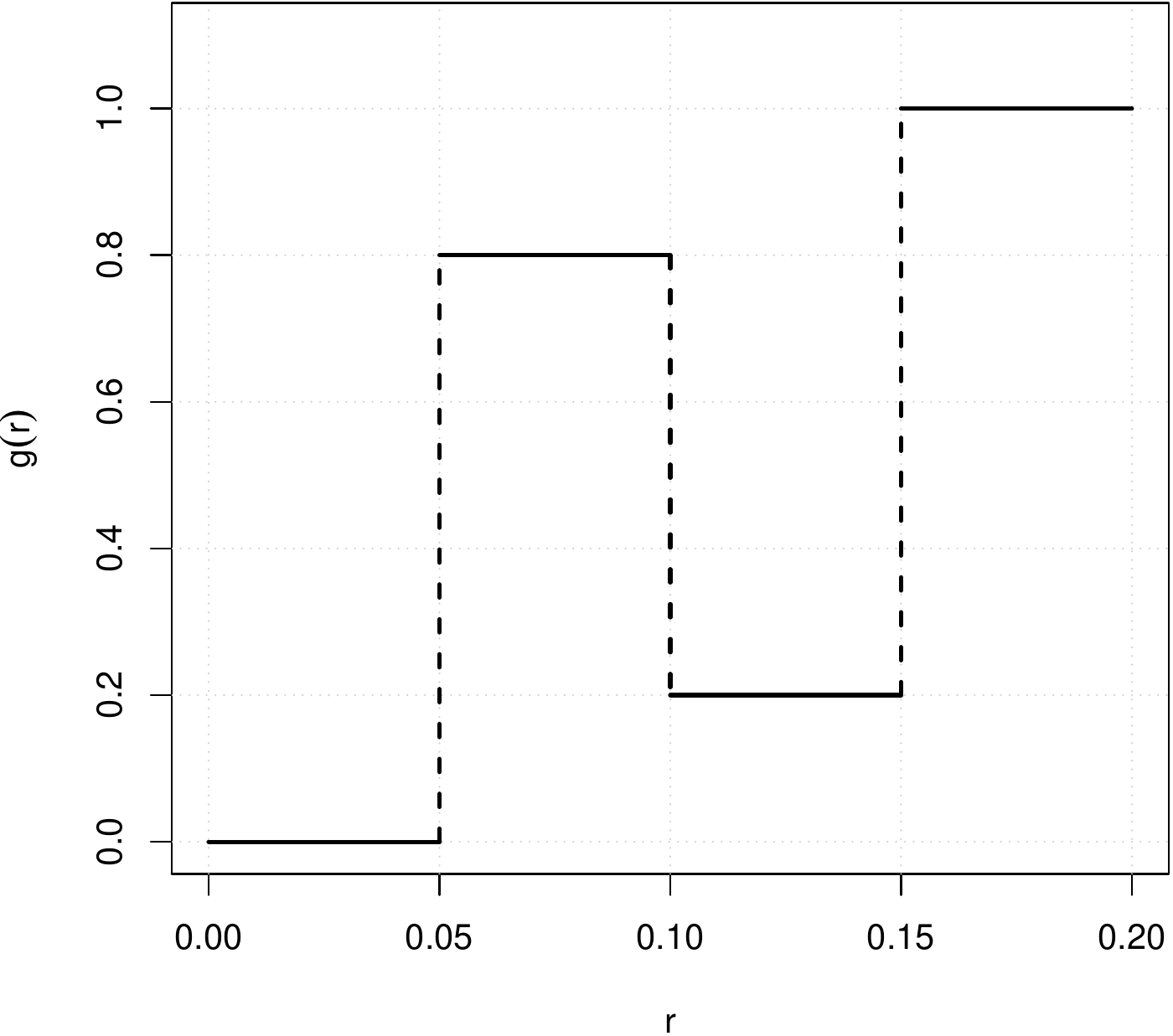}}
\subfigure[Diggle-Graton model]{\includegraphics[scale=.28]{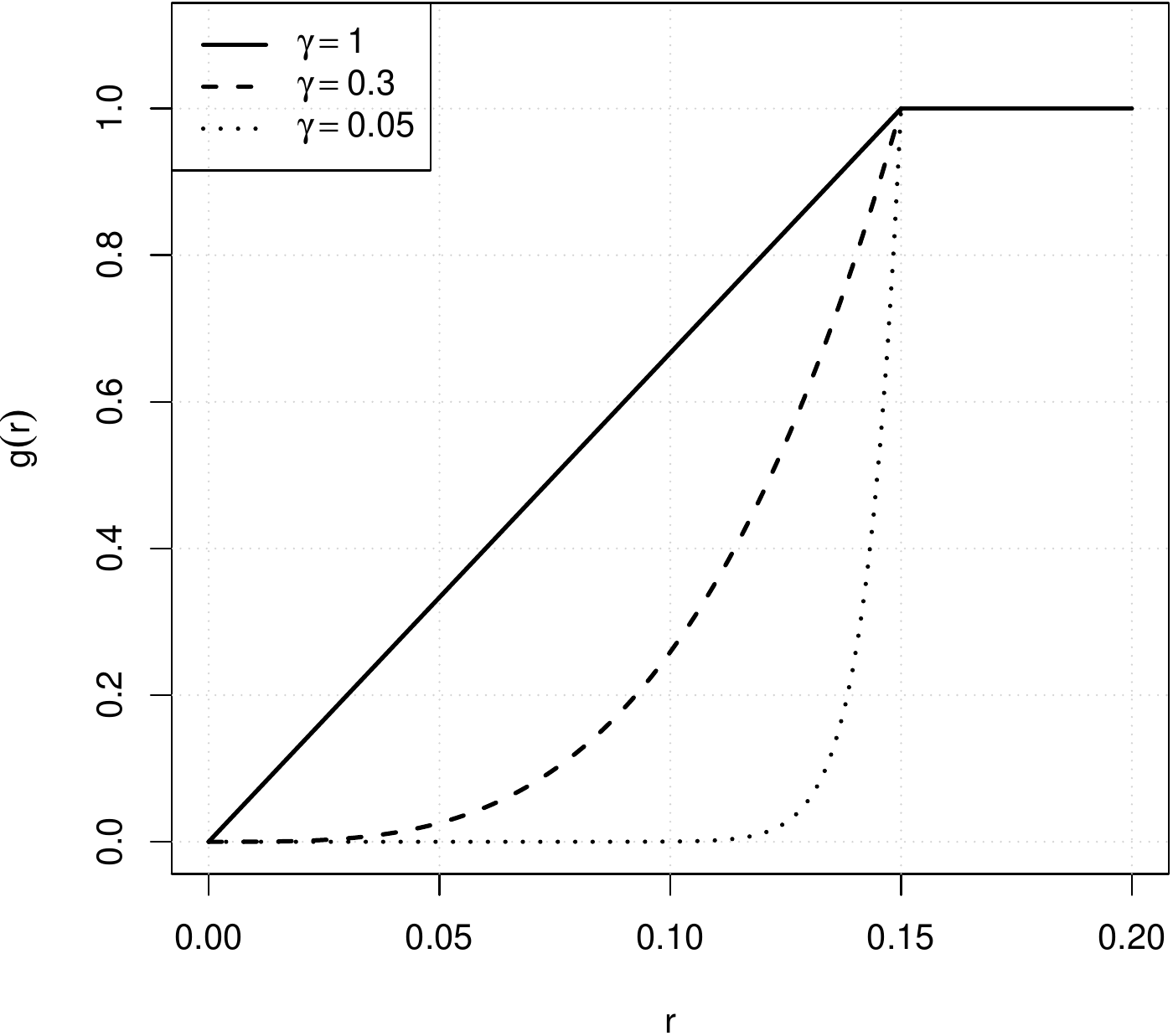}}
\caption{\label{fig:models} Examples of pairwise interaction functions for the Strauss model ($\gamma=0.5$, $R=0.15$), the piecewise Strauss hard-core model ($\delta=R_1=0.05$, $R_2=0.1$, $R_3=R=0.15$, $\gamma_1=0.8$, $\gamma_2=0.2$),  and the Diggle-Graton model ($\gamma=0.05,\,0.3$ and $1$, $R=0.15$).}
\end{figure}

A Gibbs point process has a finite range $R$ if for any $u\in \R^d$ and $\bx \in {\cal N}$, $\lambda(u,\bx) = \lambda(u,\bx \cap B(u,R))$. For pairwise interaction point processes, this property translates to $g(u) =1$ for any $u \in \R^d$ such that $\|u\|>R$. All previous models have a finite range $R<\infty$. An example of infinite range pairwise interaction point process which will not be considered in this paper is  the Lennard-Jones model (see e.g. \cite{ruelle:69,baddeley:rubak:turner:15}).

\subsection{Determinantal point processes} \label{sec:dpp}

Determinantal point processes (DPPs) are models for inhibitive point patterns. We refer to \cite{lavancier:moeller:rubak:15} for  their main statistical properties. They are defined through a kernel function $K$ which is a function from $S\times S$ to $\mathbb C$. A point process is a DPP on $S$ with kernel $K$, denoted by $\DPP(K)$, if for any $n$,  its $n$th order intensity function takes the form
\begin{equation}\label{def DPP}
  \rho^{(n)}(u_1,\dots,u_n) = \det[K](u_1,\ldots,u_n),
\end{equation}
for every $(u_1,\dots,u_n)\in S^n$, where $[K](u_1,\dots,u_n)$ denotes the matrix with entries $K(u_i,u_j)$,  $1\leq i,j\leq n$.
In particular, the intensity function of $\DPP(K)$ is $K(u,u)$.

Conditions on the kernel $K$ are required  to ensure the existence of  $\DPP(K)$. For our purpose, we will only consider DPPs on a compact set.  So let us assume that $S$ is compact and suppose that $K$ is a continuous real-valued covariance function on  $S\times S$. In this setting,  by the Mercer theorem (see \cite{riesz:90}), $K$ admits the spectral expansion 
\begin{align}\label{eq:spectral}
 K(u,v) = \sum_{i=1}^{\infty} \lambda_i \phi_i(u) {\phi_i(v)},\quad \forall u,v\in S,
\end{align}
where $\lbrace \phi_i \rbrace_{i\geq 1}$ is an orthonormal basis of $L^2(S)$ and where $\lambda_i$, 
$i\geq 1$, are referred to as the eigenvalues of $K$. Under the above assumptions, $\DPP(K)$ exists if and only if $\lambda_i\leq 1$ for all $i$. 

Due to their tractability, DPPs have many interesting properties. Many of them have been obtained by \cite{shirai:takahashi:03}, from which we derive the following key-equation used by our intensity approximation.

\begin{proposition} \label{prop:laplace}
Let $\bX$ be a $\DPP$ on a compact set $S$ with  kernel $K$.  Assume that $K$ is a continuous real-valued covariance function on $S\times S$ whose all eigenvalues are not greater than 1. 
For any function $g: S \to [0,1]$ %such that $(1-g)$ is compactly supported
\begin{equation}
  \label{eq:laplaceDPP}
  \E \left( \prod_{v\in \bX} g(v)\right) = \prod_{i\geq 1} (1-\tilde \lambda_i)
  \end{equation}
where $\tilde\lambda_i$, for $i\geq 1$, are the eigenvalues of the kernel $\widetilde K: S\times S \to \R$ given by
\[\tilde K(u,v)  = \sqrt{1 -g(u)} K(u,v)  \sqrt{1 - g(v)}.\]
\end{proposition}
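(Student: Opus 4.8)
The plan is to reduce the computation of $\E\prod_{v\in\bX}g(v)$ to a known identity for DPPs by introducing an auxiliary thinned process. Since $g:S\to[0,1]$, for each realization of $\bX$ we may, independently across points, retain a point $v\in\bX$ with probability $1-g(v)$ and delete it with probability $g(v)$; write $\bX'$ for the resulting independent thinning. Then $\prod_{v\in\bX}g(v)$ is exactly the conditional probability (given $\bX$) that $\bX'$ is empty, so $\E\prod_{v\in\bX}g(v)=\PP(\bX'=\emptyset)$. The first key step is therefore to identify the law of $\bX'$: it is a standard fact (and can be checked directly from the $n$th order intensity functions via the Campbell–Mecke formula \eqref{eq:campbell}) that an independent thinning of a DPP with kernel $K$ by retention probability $p(\cdot)=1-g(\cdot)$ is again a DPP, with kernel $\sqrt{1-g(u)}\,K(u,v)\,\sqrt{1-g(v)}=\widetilde K(u,v)$. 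Indeed the $n$th order intensity of $\bX'$ is $\prod_{i=1}^n p(u_i)\,\det[K](u_1,\dots,u_n)=\det[\widetilde K](u_1,\dots,u_n)$, which is precisely the defining relation \eqref{def DPP} for $\DPP(\widetilde K)$.

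The second step is to verify that $\widetilde K$ still satisfies the hypotheses guaranteeing existence of the DPP, so that the thinned process is genuinely $\DPP(\widetilde K)$ rather than only having the right moments: $\widetilde K$ is continuous and symmetric on the compact set $S$, it is positive semi-definite because $\sqrt{1-g}$ is real-valued (so $\widetilde K$ is obtained from the covariance function $K$ by a diagonal congruence, preserving positive semi-definiteness), and its eigenvalues $\tilde\lambda_i$ lie in $[0,1]$ since $0\le 1-g\le 1$ forces $\widetilde K\preceq K$ pointwise as quadratic forms, hence $\tilde\lambda_i\le\lambda_i\le 1$. Thus $\bX'\sim\DPP(\widetilde K)$, and by the Mercer expansion \eqref{eq:spectral} applied to $\widetilde K$ we obtain $\tilde K(u,v)=\sum_{i\ge1}\tilde\lambda_i\,\psi_i(u)\psi_i(v)$ for an orthonormal basis $\{\psi_i\}$.

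The final step is to compute $\PP(\bX'=\emptyset)$ for $\DPP(\widetilde K)$. This is the classical void probability / empty-set probability formula for DPPs, due to \citet{shirai:takahashi:03}: for a DPP on a compact set whose kernel has Mercer eigenvalues $\tilde\lambda_i\in[0,1]$, one has $\PP(\bX'=\emptyset)=\prod_{i\ge1}(1-\tilde\lambda_i)$. One way to see it is via the inclusion–exclusion (generating functional) series $\PP(\bX'=\emptyset)=\sum_{n\ge0}\frac{(-1)^n}{n!}\int_{S^n}\rho^{(n)}(u_1,\dots,u_n)\dd u_1\cdots\dd u_n=\sum_{n\ge0}\frac{(-1)^n}{n!}\int_{S^n}\det[\widetilde K]\dd u_1\cdots\dd u_n$, which is the Fredholm determinant $\det(I-\widetilde K)=\prod_{i\ge1}(1-\tilde\lambda_i)$. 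Combining the three steps yields $\E\prod_{v\in\bX}g(v)=\PP(\bX'=\emptyset)=\prod_{i\ge1}(1-\tilde\lambda_i)$, which is \eqref{eq:laplaceDPP}.

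I expect the main obstacle to be the second step rather than the first or third: one must be careful that having the correct $n$th order intensities is not by itself enough to pin down a point process, so the argument genuinely needs the existence theorem for $\DPP(\widetilde K)$ (via checking $\widetilde K$ is a continuous covariance with eigenvalues $\le1$) together with the fact that a DPP on a compact set is determined by its moments. Alternatively, the whole proof can be shortened by quoting directly the Laplace-transform formula of \citet{shirai:takahashi:03} for DPPs, in which case steps one through three collapse into verifying the hypotheses on $\widetilde K$; I would present the thinning argument as it makes the identity transparent, but either route reaches \eqref{eq:laplaceDPP}.
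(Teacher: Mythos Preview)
Your argument is correct and reaches \eqref{eq:laplaceDPP}, but by a genuinely different route from the paper. The paper proceeds directly: it writes $\E\prod_{v\in\bX}g(v)=L_{\bX}(-\log g)$, invokes Theorem~1.2 of \citet{shirai:takahashi:03} to express the Laplace functional of a DPP as the Fredholm determinant $\mathrm{Det}(I-\tilde{\mathcal K})$ for exactly the kernel $\tilde K$ in the statement, and then expands that determinant via traces to obtain $\prod_{i\ge1}(1-\tilde\lambda_i)$. You instead give a probabilistic interpretation: $\prod_{v\in\bX}g(v)$ is the conditional probability that the independent $(1-g)$-thinning $\bX'$ is empty, you identify $\bX'$ as $\DPP(\tilde K)$ through its $n$-point functions, and then you read off the void probability $\prod_{i\ge1}(1-\tilde\lambda_i)$. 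Your route is more self-contained and demystifies why $\tilde K$ appears, at the price of the moment-determinacy/existence discussion you rightly flagged; the paper's route is shorter because the Shirai--Takahashi Laplace formula already packages all three of your steps. You yourself anticipate this alternative in your final paragraph.

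One small slip in step two: the assertion that $0\le 1-g\le 1$ forces $\widetilde K\preceq K$ as quadratic forms is false in general (a rank-one $K$ with off-diagonal mass and $g$ vanishing at one point gives a counterexample). The conclusion $\tilde\lambda_i\le 1$ is nevertheless correct: with $M$ the multiplication operator by $\sqrt{1-g}$, the operator $M\mathcal K M$ has the same nonzero spectrum as $\mathcal K^{1/2}M^2\mathcal K^{1/2}\preceq \mathcal K$ (since $M^2\preceq I$), whence $\tilde\lambda_i\le\lambda_i\le 1$; or simply note $\|M\mathcal K M\|\le\|M\|^2\|\mathcal K\|\le 1$.
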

\begin{proof}
Note that
\[ \E \left( \prod_{v\in \bX} g(v)\right) = L_{\bX}(-\log g)\]
where $L_{\bX}$ denotes the Laplace transform of $\bX$. 
From Theorem 1.2 in  \citet{shirai:takahashi:03}, for any nonnegative measurable function $f$ on $S$ %with compact support
\[ L_{\bX}(f) = {\rm Det}  ( I -\mathcal {\tilde K})\]
where ${\rm Det}$ denotes the Fredholm determinant of an operator and $\mathcal {\tilde K}$ is the integral operator associated to the kernel
\[\tilde K(u,v)  = \sqrt{1 -\exp(-f(u))} K(u,v)  \sqrt{1 - \exp(-f(v))}.\]
On the other hand, see for instance (2.10)  in  \citet{shirai:takahashi:03},
\[ {\rm Det}  ( I -\mathcal {\tilde K}) = \exp\left(-\sum_{n=1}^\infty \frac 1n {\rm Tr}\, (\mathcal {\tilde K}^n )\right),\]
where ${\rm Tr}$ denotes the trace operator.  The result follows from the fact that for any $n\geq 1$
\[
  {\rm Tr}\, (\mathcal {\tilde K}^n) =\int_{S^n} \tilde K(u_1,u_2)\cdots \tilde K(u_n,u_1) {\rm d} u_1\cdots {\rm d} u_n = \sum_{i\geq 1} \tilde \lambda_i^n.
\]
\end{proof}

\section{Intensity approximation} \label{sec:approximation}

\subsection{Poisson-saddlepoint approximation}  \label{sec:lps}

We remind that the intensity parameter of a Gibbs point process, and in particular a pairwise interaction point process satisfies~\eqref{eq:intensityGibbs}. The expectaction in~\eqref{eq:intensityGibbs} is to be regarded with respect  to $\mathrm P$ the distribution of the Gibbs point process~$\bX$. \citet{baddeley:nair:12} suggest to replace $\mathrm P$ by a simpler distribution, say $\mathrm Q$, for which the right-hand-side of~\eqref{eq:intensityGibbs} becomes tractable. The Poisson-saddlepoint approximation consists in choosing $\Pi(\lambda)$, the Poisson distribution with parameter $\lambda$, as distribution $\mathrm Q$.
As a result, the Poisson-saddlepoint approximation consists in resolving the equation
\begin{equation} \label{eq:PSapproximation}
  \lambda = \beta \, \E_{\Pi(\lambda)} \left(  \prod_{v \in \bY} g(v)\right) = \beta \E_{\Pi(\lambda)} \left( \exp\left( \sum_{v \in \bY} \log g(v) \right) 
  \right),
\end{equation}
with the convention that $\log 0=-\infty$ and where, to avoid any ambiguity, we denote by $\bY$ a Poisson point process with intensity $\lambda$ defined on $\R^d$ and stress also this by indexing the $\E$ with the distribution $\Pi(\lambda)$. 
It turns out that if $g(u)\in[0,1]$ for any $u\in \R^d$, the right-hand side of~\eqref{eq:PSapproximation} is the Laplace transform of some Poisson functional and equals $\beta\exp(-\lambda G)$ where $G = \int_{\R^d} (1-g(u))\dd u$, see e.g. \citet[Proposition 3.3]{moeller:waagepetersen:04}. As noticed in \citet{baddeley:nair:12}, this formula extends to more general functions $g$, provided $G>-\infty$.  Hence, the Poisson-saddlepoint approximation, denoted by $\lambda_\PS$ in this paper, is defined as the solution of
\begin{equation}
  \label{eq:lPS}
  \lPS = \beta \exp(-\lPS  \, G)  \quad \Longleftrightarrow \quad
  \lPS = \frac{W(\beta G)}{G}
\end{equation}
where $W$ is the inverse function of $x \mapsto x \exp(x)$.

For stationary pairwise Gibbs models with finite range $R$, and such that $\lambda(u,\bx)\leq \beta$ (or equivalently such that $g\leq 1$), then $0\leq G\leq |B(0,R)|$. In this case, \citet{baddeley:nair:12} prove, among other properties, that  $\lPS$  exists uniquely and is an increasing function of $\beta$.
From a numerical point of view, $\lambda_\PS$ can be very efficiently and quickly estimated using root-finding algorithms.

\subsection{DPP approximation} \label{sec:ldpp}

Following the same idea as the Poisson-saddlepoint approximation, for a repulsive stationary pairwise interaction point process  with pairwise interaction function $g\leq 1$ having a finite range $R$, we suggest to substitute the measure $\mathrm P$ involved in the expectation~\eqref{eq:intensityGibbs} by the measure $\mathrm Q$ corresponding to a $\DPP$ defined on $B(0,R)$ with some kernel $K$ (to be chosen) and intensity $\lambda$, i.e. $K(u,u)=\lambda$. Similarly to the previous section, by letting $\DPP(K;\lambda)$ denote the distribution of such a $\DPP$ and $\bY \sim \DPP(K;\lambda)$, the DPP approximation  of the intensity $\lambda$  is the solution of
\begin{equation}\label{eq1_DPP} \lambda = \beta \, \E_{\DPP(K;\lambda)}\left(  \prod_{v \in \bY} g(v)\right).
\end{equation}

From Proposition~\ref{prop:laplace} and in particular from~\eqref{eq:laplaceDPP},  this yields the estimating equation 
\[\log\lambda = \log\beta +\sum_{i\geq 1} \log (1-\tilde \lambda_i),\]
where the eigenvalues $\tilde \lambda_i$ of $\tilde K$ are related to $\lambda$ by the relation  
\[
\tilde K(u,v)=\sqrt{1 -g(u)} K(u,v)  \sqrt{1 - g(v)} \quad \mbox{ with } \quad K(u,u)=\lambda.   
\]
To complete this approximation, the eigenvalues $\tilde\lambda_i$ need to be specified.% through the choice of  $K$.

In the following we  choose the eigenvalues $\tilde\lambda_i$ to be zero except a finite number $N$ of them that are all equal. Given that 
\[
\sum_{i\geq 1} \tilde\lambda_i= \int_{\R^d} \tilde K(u,u){\rm d} u = \int_{\R^d} (1-g(u))K(u,u){\rm d} u =\lambda G,  
\] 
this means that for some $N\geq \lambda G$,  
\begin{equation}\label{eq:tilde_lambda}
\tilde \lambda_i = \frac{\lambda G} N,\quad \text{for } i=1,\dots,N
\end{equation}
 and $\tilde \lambda_i=0$ for $i\geq N+1$. With this choice, the integer $N$ remains the single parameter to choose in our approximation. Note that $N\geq \lambda G$ is a necessary condition to ensure $\tilde\lambda_i \leq 1$ and so the existence of a DPP with kernel $\tilde K$, but it is  in general not sufficient to ensure the existence of the relation between $\tilde K$ and  $K$ where $K$ defines a DPP. This will be clearly illustrated below when $g$ is the Strauss interaction function.   For the choice \eqref{eq:tilde_lambda}, the DPP approximation of the intensity, denoted by $\lDPP$,  becomes the solution of 
\begin{equation}\label{eq:estimating}\log\lDPP = \log\beta + N \log \left(1- \frac{\lDPP G} N\right)
\Longleftrightarrow
\lDPP = \beta \left( 1 - \frac{\lDPP G}N\right)^N.
\end{equation}

\medskip

To motivate \eqref{eq:tilde_lambda} and how we should set $N$, assume for a moment that $g$ is the interaction function of a Strauss model with range $R$ and interaction parameter $\gamma\in[0,1]$, see~\eqref{g_Strauss}. In this case
$\tilde K(u,v)=(1-\gamma) K(u,v)$ for any $u,v\in B(0,R)$ and the eigenvalues $\lambda_i$ of $K$ satisfy $\tilde\lambda_i = (1-\gamma) \lambda_i$.  
In the approximation \eqref{eq1_DPP}, we start by choosing a kernel $K$ with a finite number of non-vanishing eigenvalues $\lambda_i$ that are all equal.  In view of $\sum \lambda_i = \int K(u,u){\rm d} u = \lambda b$, where $b$ denotes the volume of $B(0,R)$, this leads to  
$\lambda_i = \lambda b/N$ for $i=1,\dots,N$ and $N\geq \lambda b$.  Note that the latter inequality is necessary to ensure the existence of $\DPP(K)$.  Going back to $\tilde\lambda_i$, this means that \eqref{eq:tilde_lambda} follows with the necessary and sufficient condition $N\geq \lambda b = \lambda G/(1-\gamma)$ which is greater than $\lambda G$. 

In order to set $N$ precisely for the Strauss model, remember that a homogeneous DPP is  more repulsive when  its eigenvalues  are close to 1, see~ \cite{lavancier:moeller:rubak:15, biscio2016}, and at the opposite a DPP is close to a Poisson point process when its eigenvalues are all close to 0. This suggests that in order to make the approximation \eqref{eq1_DPP} efficient, we should choose $\lambda_i$ close to 1 when the Gibbs process we want to approximate is very repulsive, that is when $\gamma$ is  close to 0. Moreover the eigenvalues should decrease to $0$ when $\gamma$ increases to $1$. If $\lambda_i = \lambda b/N$, this is equivalent to choosing $N$ an integer that increases from $\lambda b$ to infinity  when $\gamma$ increases from $0$ to $1$. A natural option is thus to choose $N$ as the smallest integer larger than  $\lambda b/(1-\gamma)$. Our final choice for the Strauss model is therefore  $N=\lceil \lambda b/(1-\gamma) \rceil$, where $\lceil.\rceil$ denotes the ceiling function, which we may write, for later purposes, 
%\[\tilde \lambda_i =\frac{ (1-\gamma) \lambda b } N = \frac{\lambda G} N,\quad i=1,\dots,N,\]
$N=\lceil \lambda G/(1-\gamma)^2 \rceil$. 

However, with the latter choice,  the function in the right-hand side of equation \eqref{eq:estimating} is not continuous in $\lambda$, which may lead to none or several solutions to this equation. As a last step in our approximation, we therefore consider the upper convex envelope of this function, ensuring a unique solution to \eqref{eq:estimating}. This finally leads for the Strauss interaction process to the approximation $\lDPP$ defined as the solution of
\[
\log\lDPP = \log\beta + (1+ \lDPP G/(1-\gamma)^2 )  \log  \left(1- \frac{\lDPP G} {1+\lDPP G/(1-\gamma)^2}\right).
\]

\medskip

Let us now discuss the case of a general pairwise interaction function $g$. In this setting, it is  in general not possible  to relate the eigenvalues $\lambda_i$ of  $K$ with the eigenvalues $\tilde \lambda_i$  of  $\tilde K$. Motivated by the Strauss case, we choose $\tilde\lambda_i$ as in  \eqref{eq:tilde_lambda} where $N=\lceil \lambda G/\kappa \rceil$ and $\kappa \in [0,1]$ is a parameter that takes into account the repulsiveness encoded in $g$. %Remember that for the Strauss model, we chose $\kappa=(1-\gamma)^2$.
In general $\kappa$ must be close to $0$ when $g$ is close to $1$ (the Poisson case), and close to $1$ when $g$ is close to a pure hard-core interaction. We decide to quantify the repulsiveness of the model by $b^{-1}\int (1-g)^2$, in agreement with our choice for the Strauss model for which $\kappa=(1-\gamma)^2$. Note that for a pairwise interaction $g$ with range $R$ and involving  a possible hard-core distance $\delta$, we have $|B(0,\delta)|\leq \int (1-g)^2\leq |B(0,R)|$ where the left or right equality occurs for a pure hard-core interaction (if $\delta>0$), a situation where $\kappa$ must be $1$. This leads us to the choice
\begin{equation}\label{eq:kappa}
\kappa = \max \left(\frac{|B(0,\delta)|}{\int (1-g)^2} , \frac{\int (1-g)^2}{|B(0,R)|}\right).
\end{equation}
Plugging $N=\lceil \lambda G/\kappa \rceil$ into  \eqref{eq:estimating} and considering the upper convex envelope to ensure the existence of a unique solution, we finally end up with our general DPP approximation being the solution of 
\begin{align}
&\log\lDPP = \log\beta + (1+\lDPP G/\kappa )  \log  \left(1- \frac{\lDPP G} {1+\lambda G/\kappa}\right)\label{eq:DPPapproximation1} \\
\Longleftrightarrow \quad & \lDPP = \frac{W_\kappa \left( \beta G/\kappa\right)}{G/\kappa}, \label{eq:DPPapproximation3}
\end{align}
where $\kappa$ is given by \eqref{eq:kappa} and
where $W_\kappa$ is the inverse function of \linebreak$x \mapsto x\left( 1 - \frac{\kappa x}{1+x}\right)^{-1-x}$.

In view of~\eqref{eq:DPPapproximation1}-\eqref{eq:DPPapproximation3} and similarly to $\lPS$, the approximation $\lDPP$ can be very efficiently implemented using root-finding algorithms. We further have the following properties. 

\begin{theorem}\label{thm:DPP}
Consider a stationary pairwise interaction process in $\R^d$ with Papangelou conditional intensity given by~\eqref{eq:papangelouPIPP} which is purely inhibitory, i.e. $g(u)\leq 1$ for all $u\in \R^d$ and with finite range $R$. Then,  $\lDPP$ exists uniquely, is an increasing function of $\beta$ and is such that $\lDPP \leq \lPS$. \\
 \end{theorem}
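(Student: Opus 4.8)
The statement bundles three claims about $\lDPP$, the solution of~\eqref{eq:DPPapproximation1}, equivalently $\lDPP = W_\kappa(\beta G/\kappa)/(G/\kappa)$ from~\eqref{eq:DPPapproximation3}: (i)~existence and uniqueness; (ii)~monotonicity in $\beta$; (iii)~the comparison $\lDPP\le\lPS$. I would organize the argument around the single-variable function $\psi_\kappa(x) = x\bigl(1-\tfrac{\kappa x}{1+x}\bigr)^{-1-x}$ on $x\ge 0$, whose inverse is $W_\kappa$. First I would record the trivial boundary cases: if $G=0$ then $g\equiv 1$ a.e., the model is Poisson, and $\lDPP = \beta = \lPS$, so assume $G>0$; also note $0\le G\le |B(0,R)|$ and $\kappa\in(0,1]$ under the hypotheses (with $\kappa=1$ iff pure hard-core), using~\eqref{eq:kappa} and the bound $|B(0,\delta)|\le\int(1-g)^2\le|B(0,R)|$ already noted in the text.

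\textbf{Existence and uniqueness.} Rewrite~\eqref{eq:DPPapproximation1} as $\lDPP = \beta\,\phi(\lDPP)$ where, with $t = \lDPP G/\kappa$, $\phi(\lambda) = \bigl(1-\tfrac{\kappa t}{1+t}\bigr)^{1+t} = \bigl(\tfrac{1+(1-\kappa)t}{1+t}\bigr)^{1+t}$. The plan is to show $\lambda\mapsto \beta\phi(\lambda)$ maps $[0,\infty)$ to itself, is continuous, and that $\lambda - \beta\phi(\lambda)$ is strictly increasing, so it has a unique zero. Continuity is clear since $1-\kappa\ge 0$ keeps the base in $[0,1]$. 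For the monotonicity of $\lambda-\beta\phi(\lambda)$ it suffices that $\phi$ is non-increasing in $\lambda$: taking logs, $\log\phi = (1+t)\log\frac{1+(1-\kappa)t}{1+t}$, and differentiating in $t$ one gets $\frac{d}{dt}\log\phi = \log\frac{1+(1-\kappa)t}{1+t} + (1+t)\bigl(\tfrac{1-\kappa}{1+(1-\kappa)t}-\tfrac{1}{1+t}\bigr)$; the first term is $\le 0$ and the bracket is $\le 0$ (it equals $\tfrac{-\kappa}{(1+(1-\kappa)t)(1+t)}\cdot(1+t)$-type expression, negative for $\kappa>0$), so $\phi$ is non-increasing, indeed strictly decreasing when $G>0,\kappa>0$. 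Equivalently, this is the statement that $\psi_\kappa$ is strictly increasing from $0$ to $\infty$ on $[0,\infty)$, hence $W_\kappa$ is a well-defined increasing bijection and~\eqref{eq:DPPapproximation3} gives the unique $\lDPP$. I expect this monotonicity computation to be the most delicate routine step, since one must check the sign carefully over the whole range $\kappa\in(0,1]$, $t\ge 0$.

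\textbf{Monotonicity in $\beta$.} From $\lDPP = W_\kappa(\beta G/\kappa)/(G/\kappa)$ and the fact that $W_\kappa$ is increasing (just established), $\lDPP$ is an increasing function of $\beta$ for fixed $G,\kappa$; alternatively, implicit differentiation of $\lambda = \beta\phi(\lambda)$ gives $\frac{d\lambda}{d\beta} = \frac{\phi(\lambda)}{1-\beta\phi'(\lambda)}$ with numerator $>0$ and denominator $>0$ because $\phi'\le 0$.

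\textbf{The comparison $\lDPP\le\lPS$.} Here the plan is to compare the two fixed-point equations $\lPS = \beta e^{-\lPS G}$ and $\lDPP = \beta\phi(\lDPP)$ by showing $\phi(\lambda)\le e^{-\lambda G}$ for all $\lambda\ge 0$, i.e. $\log\phi(\lambda)\le -\lambda G$, i.e. with $t=\lambda G/\kappa$, $(1+t)\log\frac{1+(1-\kappa)t}{1+t}\le -\kappa t$. Setting $s = \kappa t/(1+t)\in[0,\kappa)\subseteq[0,1)$, the left side is $(1+t)\log(1-s)$ and the inequality becomes $(1+t)\log(1-s) \le -(1+t)s$, i.e. $\log(1-s)\le -s$, which is the standard elementary inequality valid for $s\in[0,1)$. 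Once $\phi(\lambda)\le e^{-\lambda G}$ is in hand, define $F_\sDPP(\lambda)=\lambda-\beta\phi(\lambda)$ and $F_\sPS(\lambda)=\lambda-\beta e^{-\lambda G}$; both are strictly increasing with unique roots $\lDPP,\lPS$, and $F_\sPS(\lambda)\le F_\sDPP(\lambda)$ pointwise, so $F_\sDPP(\lPS)\ge F_\sPS(\lPS)=0 = F_\sDPP(\lDPP)$, whence $\lPS\ge\lDPP$ by monotonicity of $F_\sDPP$. I would close by remarking that equality $\lDPP=\lPS$ holds iff $\kappa\to 0$ in the appropriate sense or $G=0$, i.e. essentially the Poisson case, consistent with Figure~\ref{fig:straussintro}.
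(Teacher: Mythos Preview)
Your argument is correct and follows essentially the same route as the paper: both set up the fixed-point functions $f_\sPS(\lambda)=\beta e^{-\lambda G}$ and $f_\sDPP(\lambda)=\beta\phi(\lambda)$, use the elementary inequality $\log(1-s)\le -s$ to get the pointwise bound $f_\sDPP\le f_\sPS$, and deduce existence/uniqueness from $f_\sDPP$ being decreasing with $f_\sDPP(0)=\beta$, $f_\sDPP(\infty)=0$, and monotonicity in $\beta$ from $W_\kappa$ being increasing. The only difference is cosmetic: where the paper simply asserts that $f_\sDPP$ is decreasing and $W_\kappa$ is increasing (``it can be verified''), you actually carry out the derivative computation for $\log\phi$, which is a nice addition.
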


\begin{proof}
Let $f_\sPS$ and $f_\sDPP$ denote the two real-valued functions given by
\[
  f_\sPS(\lambda) = \beta \exp(-\lambda G) \quad \mbox{ and } \quad 
  f_\sDPP(\lambda) = \beta \left(1 - \frac{\lambda G}{1+\lambda G/\kappa} \right)^{1+\lambda G/\kappa}.
\]
The approximations $\lPS$ and $\lDPP$ are defined by the fixed point equations $\lPS=f_\sPS(\lPS)$ and $\lDPP=f_\sDPP(\lDPP)$. Since for any $x\in [0,1)$, $\log(1-x)\leq -x$, we have for any $\lambda$
\begin{equation}\label{eq:fPSfDPP}
  0 \leq f_\sDPP(\lambda) \leq \beta \exp(-\lambda G) = f_\sPS(\lambda).
\end{equation}
In particular $0\leq \lim_{\lambda\to \infty}f_\sDPP(\lambda) \leq \lim_{\lambda \to \infty} f_\sPS(\lambda)=0$. In addition, $f_\sDPP(0)=\beta$ and it can be verified that $f_\sDPP$ is a decreasing function. Hence the solution to~\eqref{eq:DPPapproximation1} exists uniquely. The function $W_\kappa$ can also be shown to be increasing on $\R^+$ for any $\kappa \in [0,1]$, so we deduce from~\eqref{eq:DPPapproximation3} that $\lDPP$ is an increasing function of $\beta$. Finally,~\eqref{eq:fPSfDPP} shows that $\lDPP\leq \lPS$. \\
\end{proof}

\section{Numerical study}\label{sec:simu}

In this section, we focus on the planar case to investigate the performances of the DPP approximation and compare it with the initial one proposed by \citet{baddeley:nair:12}. All computations were performed in the \texttt R language \citep{rcore:11}. The Poisson-saddlepoint approximation as well as the DPP approximation are implemented using root-finding algorithms and in particular we use the \texttt R function \texttt{uniroot} for this task.

We have considered 14 different numerical experiments involving Strauss models (S), Strauss hard-core models (SHC), Diggle-Graton models (DG), piecewise Strauss models (PS) and piecewise Strauss hard-core (PSHC) models. The pairwise interaction functions of these models are detailed in Section~\ref{sec:gibbs}. To sum up here are the parameters, that include a continuously varying parameter $\gamma_1\in [0,1]$:
\begin{itemize}
\item Strauss (S): $\beta=100$ with $R=0.05$ or $0.1$; $\beta=50$ with $R=0.1$ or $0.15$; $\beta=200$ with $R=0.05$. For all these models $\gamma=\gamma_1$.
\item Strauss hard-core (SHC): $\beta=200$, $\delta=0.025$, $R=0.05$. For this model $\gamma=\gamma_1$.
\item Diggle-Graton (DG): $\beta=200$, $R=0.025, 0.05$ or $0.075$ and $\beta=50$ and $R=0.15$. For all these models $\gamma=\gamma_1$.
\item Piecewise Strauss and Strauss hard-core (PS and PSHC): $\beta=200$, $\delta=0$ or $0.025$, $\gamma=(\gamma_1,\gamma_2)$ with $\gamma_2=0$ or $0.5$. The vector of breaks is $R=(0.05,0.1)$.
\end{itemize}

For each numerical experiment, we therefore obtain curves of intensity approximation in terms of $\gamma_1$. For $\gamma_1$ varying from 0 to 1 by step of 0.05 (the value 0 is exluded for DG models to save time), the true intensity $\lambda$ is estimated by Monte-Carlo methods. For each set of parameters $m$ realizations of the model are generated on the square $[-2R,1+2R]^2$ and then clipped to the unit square. That strategy is detailed and justified by \citet{baddeley:nair:12}. Specifically, the number of points in each realization is
averaged to obtain the estimated intensity and its standard error. The simulation results for the Strauss models with $\beta=50$ or $100$ were obtained by~\citet{baddeley:nair:12}, where $m=10000$ realizations were generated and the exact simulation algorithm was used, implemented in the \texttt{R} function \texttt{rStrauss} of the \texttt{spatstat} package (see \cite{baddeley:rubak:turner:15}). For the Strauss models with $\beta=200$, SHC models, PS and PSHC models, we generate $m=1000$ replications and use the \texttt{rmh} function in the \texttt{spatstat} package which implements a Metropolis-Hastings algorithm. Even if we use $10^6$ iterations of the algorithm, the results may be slightly biased. For the DG models, the \texttt{R} package \texttt{spatstat} provides an exact simulation algorithm (function \texttt{rDiggleGraton}) and for such models we generate $10000$ replications when $\beta=200$ and $R=0.025,0.05$ and when $\beta=50$ and $R=0.15$. We used  $1000$ replications when $\beta=200$ and $R=0.075$ to save time.

All results can be found in Figures~\ref{fig:strauss},~\ref{fig:dg} and~\ref{fig:piecewise}. Plots provide the same information: we depict intensity approximation $\lambda$ based on different methods in terms of $\gamma_1$. 
The dashed curve represents the Poisson-saddlepoint approximation proposed by \citet{baddeley:nair:12} and detailed in Section~\ref{sec:lps}. The solid curve  is the DPP approximation we propose in this paper and is given by~\eqref{eq:DPPapproximation3}.

Let us first comment Figure~\ref{fig:strauss} dealing with Strauss models. As expected the Poisson-saddlepoint approximation is not efficient when $\gamma_1$ is small, i.e. for very repulsive models. This is very significant in particular for the Strauss hard-core model, see Figure~\ref{fig:strauss}~(f). The DPP aproximation we propose is more likely able to capture the repulsiveness of the Strauss models. Figure~\ref{fig:dg} also clearly shows that our approximation is particulalry efficient and outperforms unamibigously the Poisson-saddlepoint approximation. Note that replications for the Diggle-Graton models are generated using an exact algorithm; so the numerical results seem to be exact, except the slight bias induced by clipping the pattern from $[-2R,1+2R]^2$ to the unit square. 

We finally comment Figure~\ref{fig:piecewise}. When $\gamma_2=0.5$, i.e. Figures~\ref{fig:piecewise}~(a)-(b), the results are very satisfactory. Our approximation is able to approximate $\lambda$ very efficiently for any value of $\gamma_1$. For Figures~\ref{fig:piecewise}~(c)-(d), $\gamma_2=0$ which means that points within a distance comprised between 0.05 and 0.1 are forbidden. Such a parameterization tends to create repulsive clusters. When $\gamma_1=1$ and $\delta=0$, such a piecewise Strauss model was called annulus model by \citet{stucki:schuhmacher:14}. This model demonstrates the limitations of our approximation even if when $\gamma_1$ is close to zero which means that the model is close to a hard-core process with radius $0.1$ our approximation remains satisfactory.

\begin{figure}[h]
\subfigure[S: $R=0.05$, $\beta=100$]{\includegraphics[scale=.4]{strauss005b100-1.pdf}}
\subfigure[S: $R=0.1$, $\beta=100$]{\includegraphics[scale=.4]{strauss01b100-1.pdf}} \\
\subfigure[S: $R=0.1$, $\beta=50$]{\includegraphics[scale=.4]{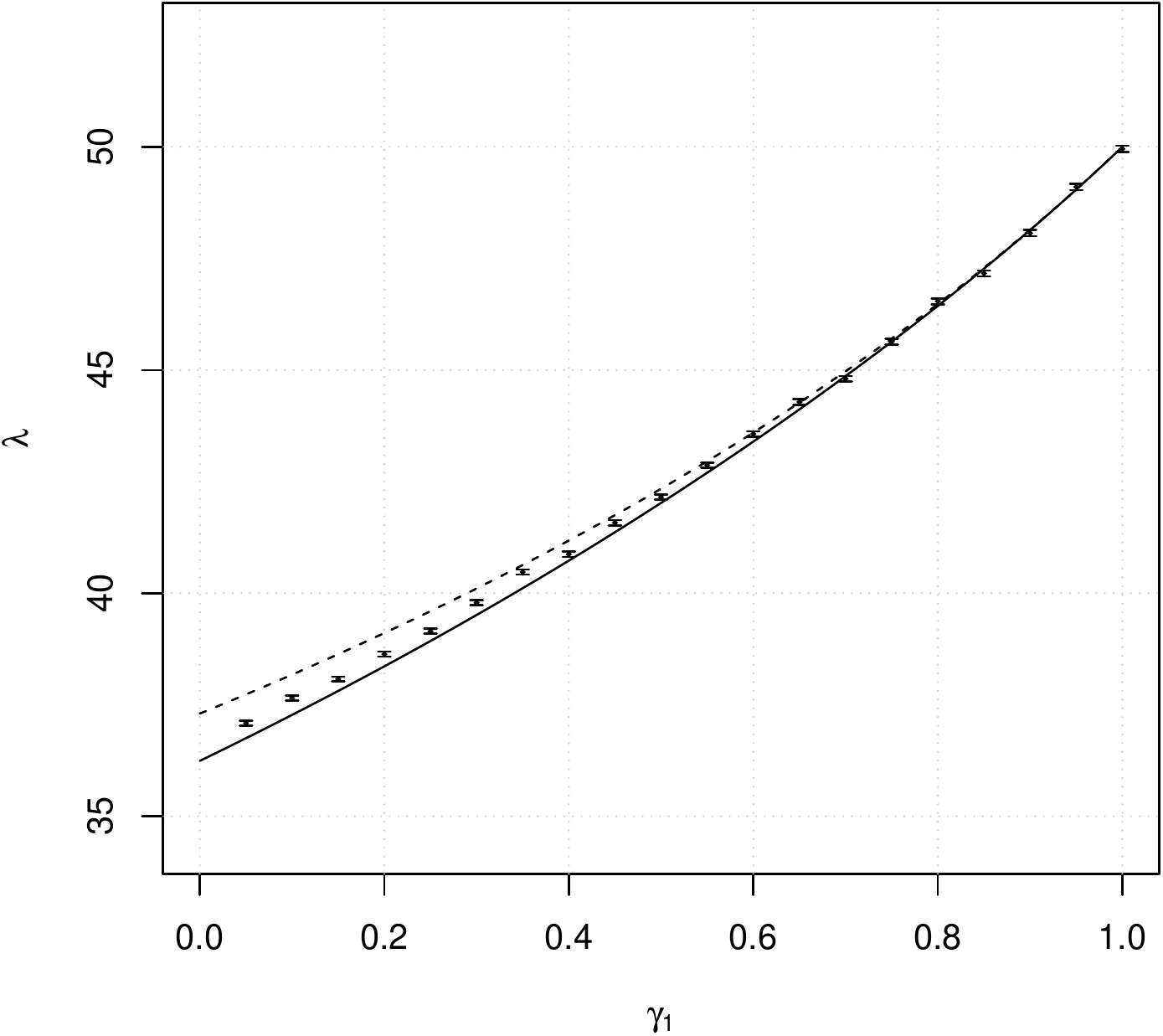}}
\subfigure[S:  $R=0.15$, $\beta=50$]{\includegraphics[scale=.4]{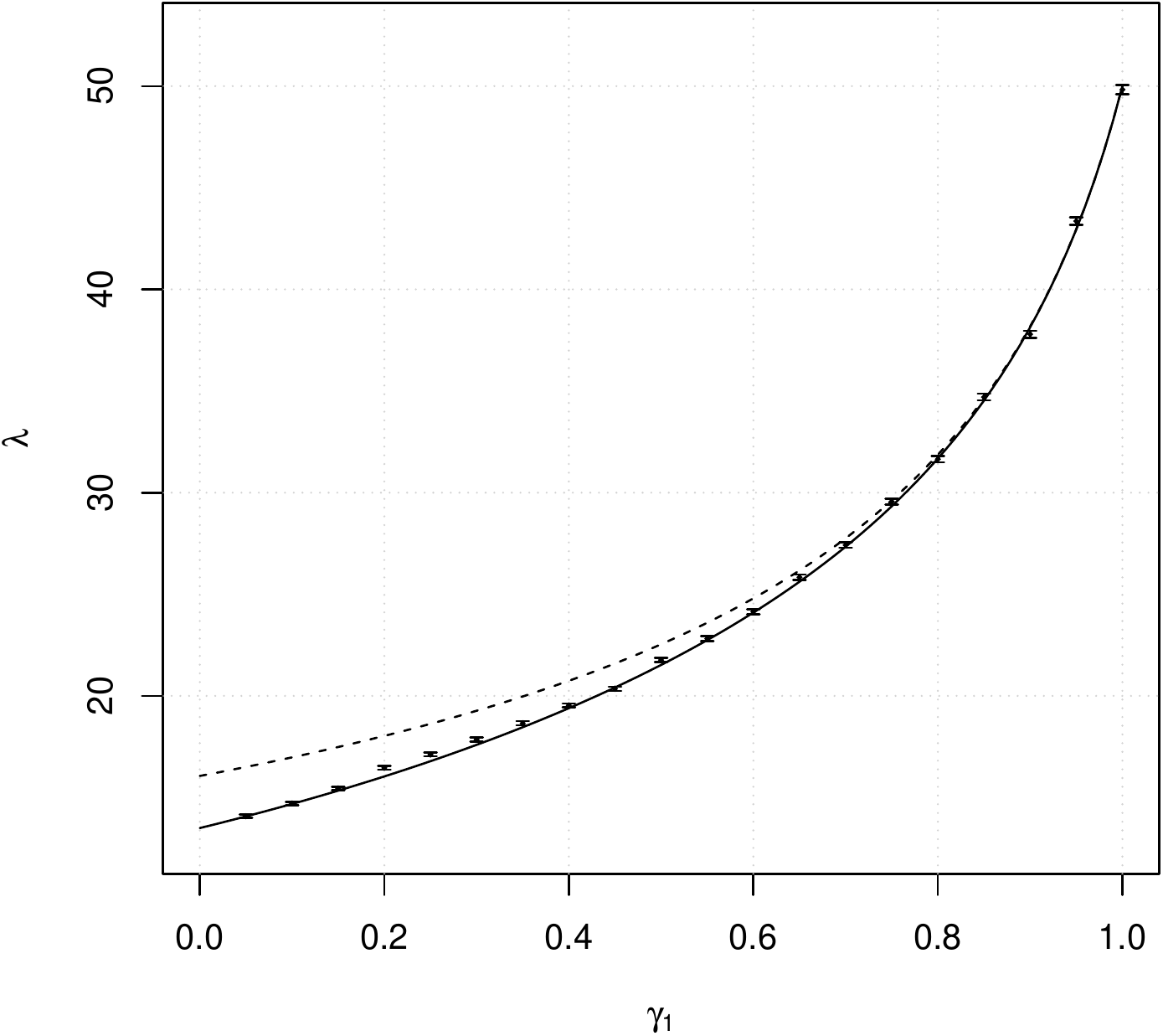}} \\
\subfigure[S: $R=0.05$, $\beta=200$]{\includegraphics[scale=.4]{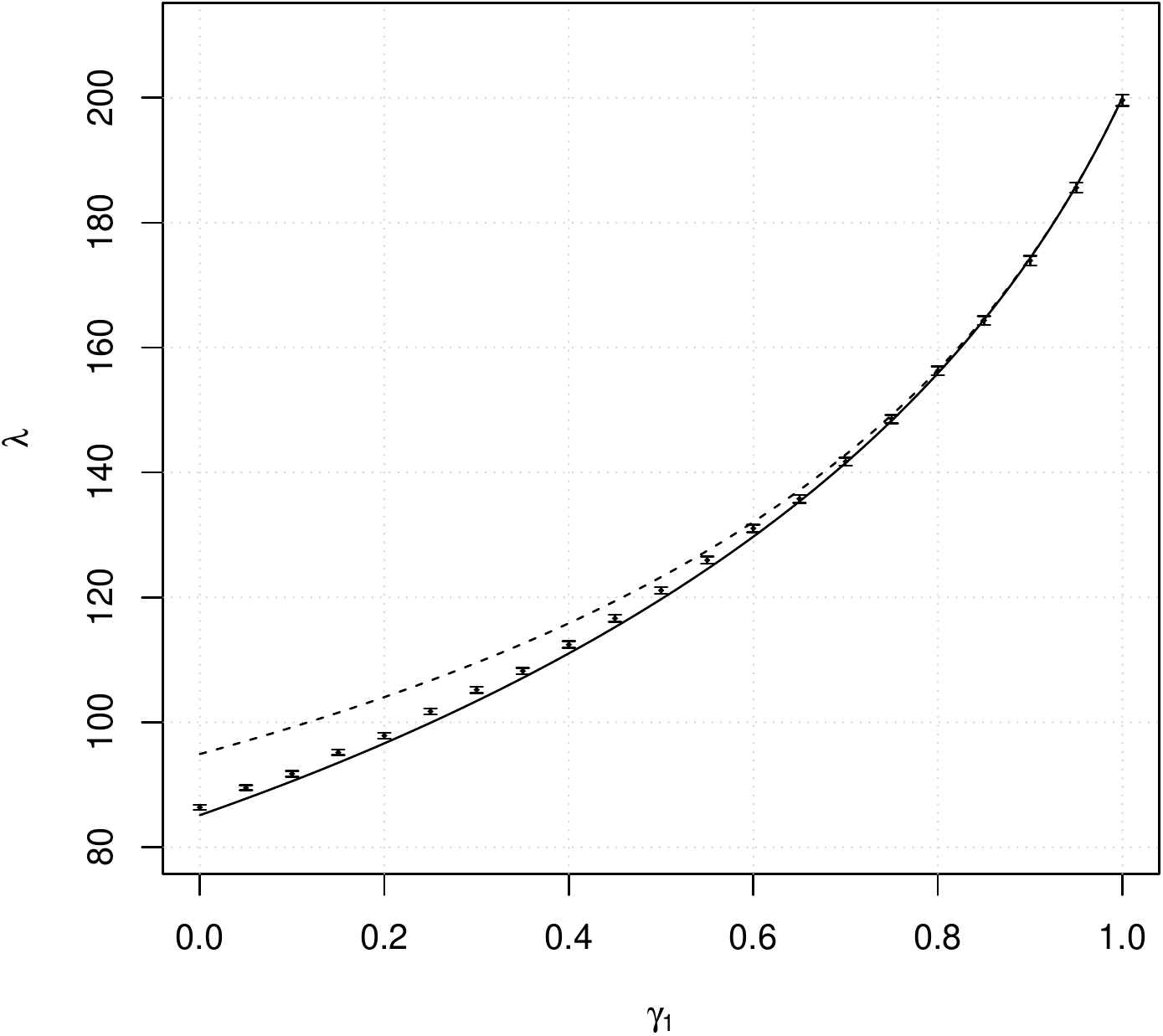}}
\subfigure[SHC: $\delta=0.025$, $R=0.05$, $\beta=200$]{\includegraphics[scale=.4]{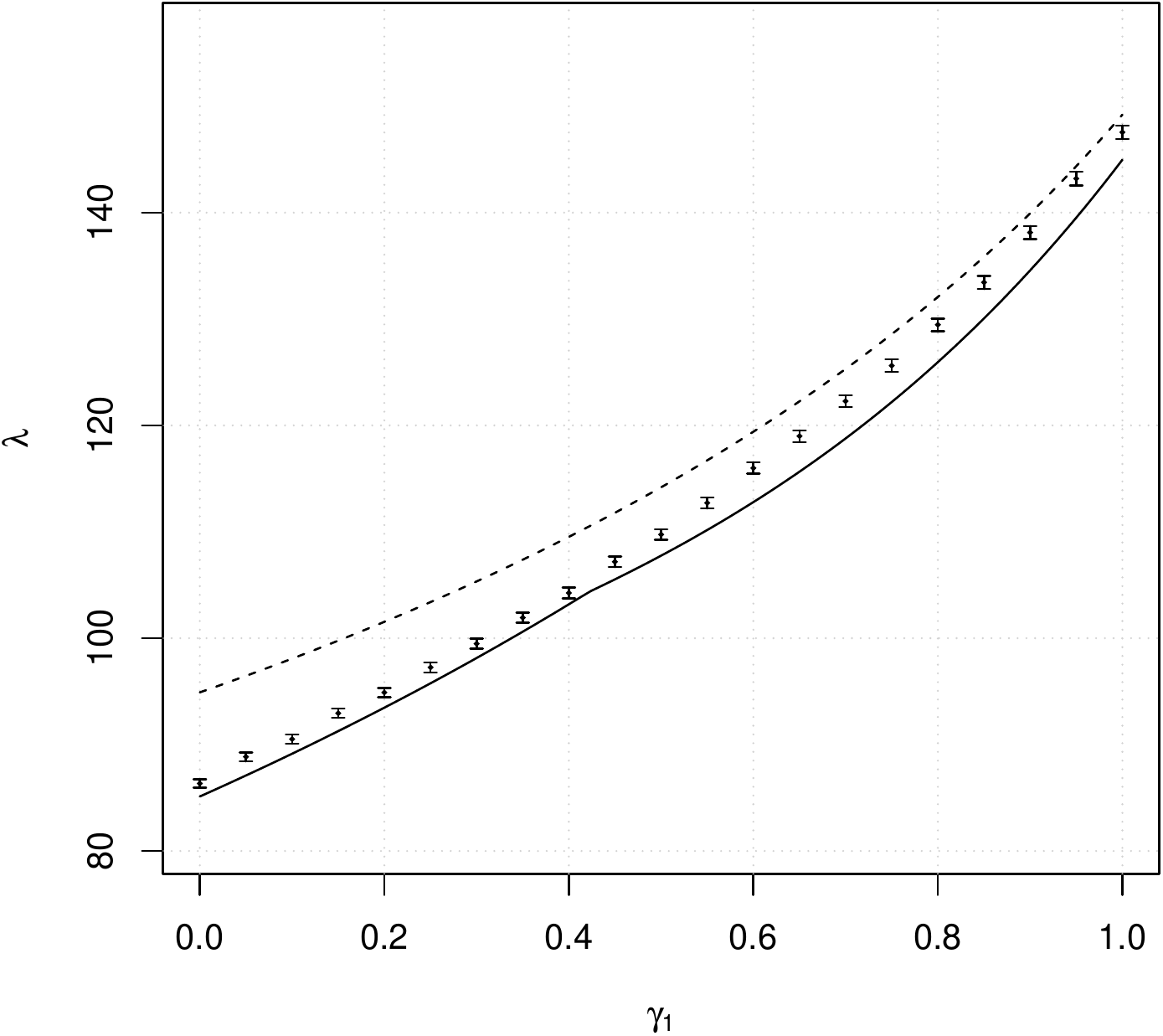}} \\
\caption{Comparison of the exact intensity (small boxplots obtained by Monte-Carlo method), the Poisson-saddlepoint approximation (dashed line) and the DPP approximation  (solid line) for homogeneous Strauss and Strauss hard-core models with activity parameter $\beta$, range of interaction $R$ and eventually hard-core distance $\delta$. Curves and boxplots are reported in terms of the interaction parameter $\gamma_1 \in [0,1]$. \label{fig:strauss} }
\end{figure}

\begin{figure}[h]
\subfigure[DG: $R=0.025$, $\beta=200$]{\includegraphics[scale=.4]{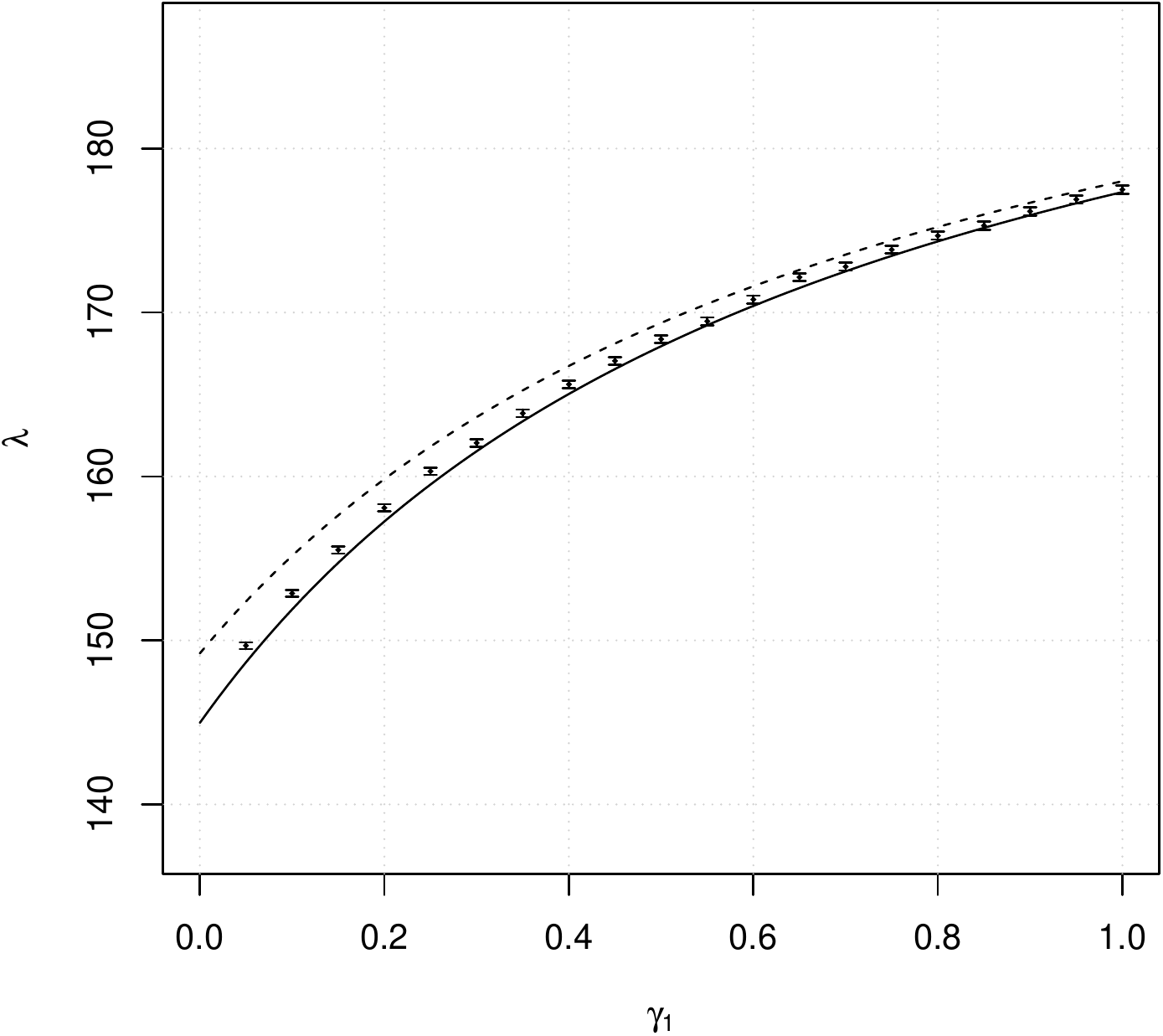}}
\subfigure[DG: $R=0.05$, $\beta=200$]{\includegraphics[scale=.4]{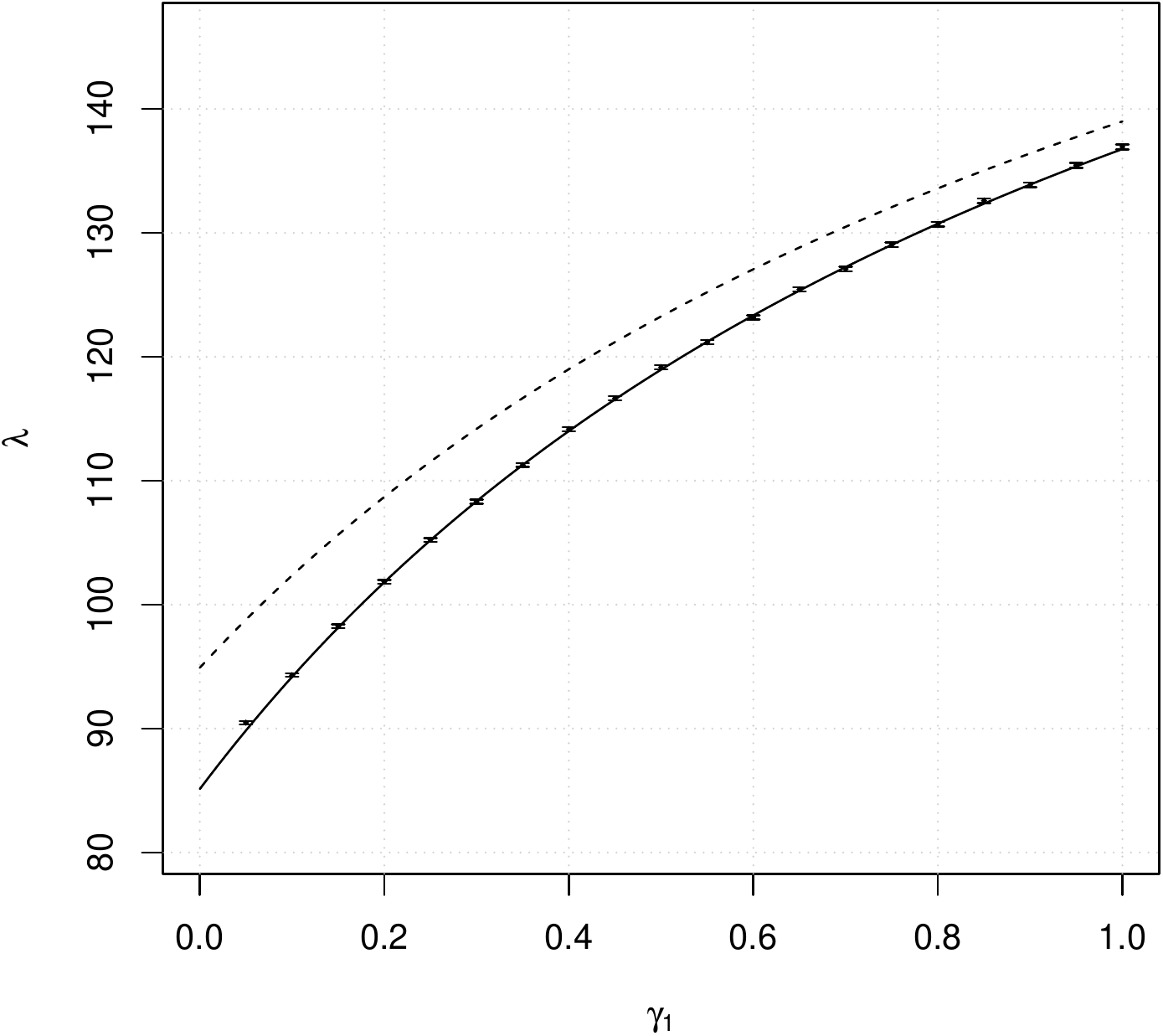}}\\
\subfigure[DG: $R=0.075$, $\beta=200$]{\includegraphics[scale=.4]{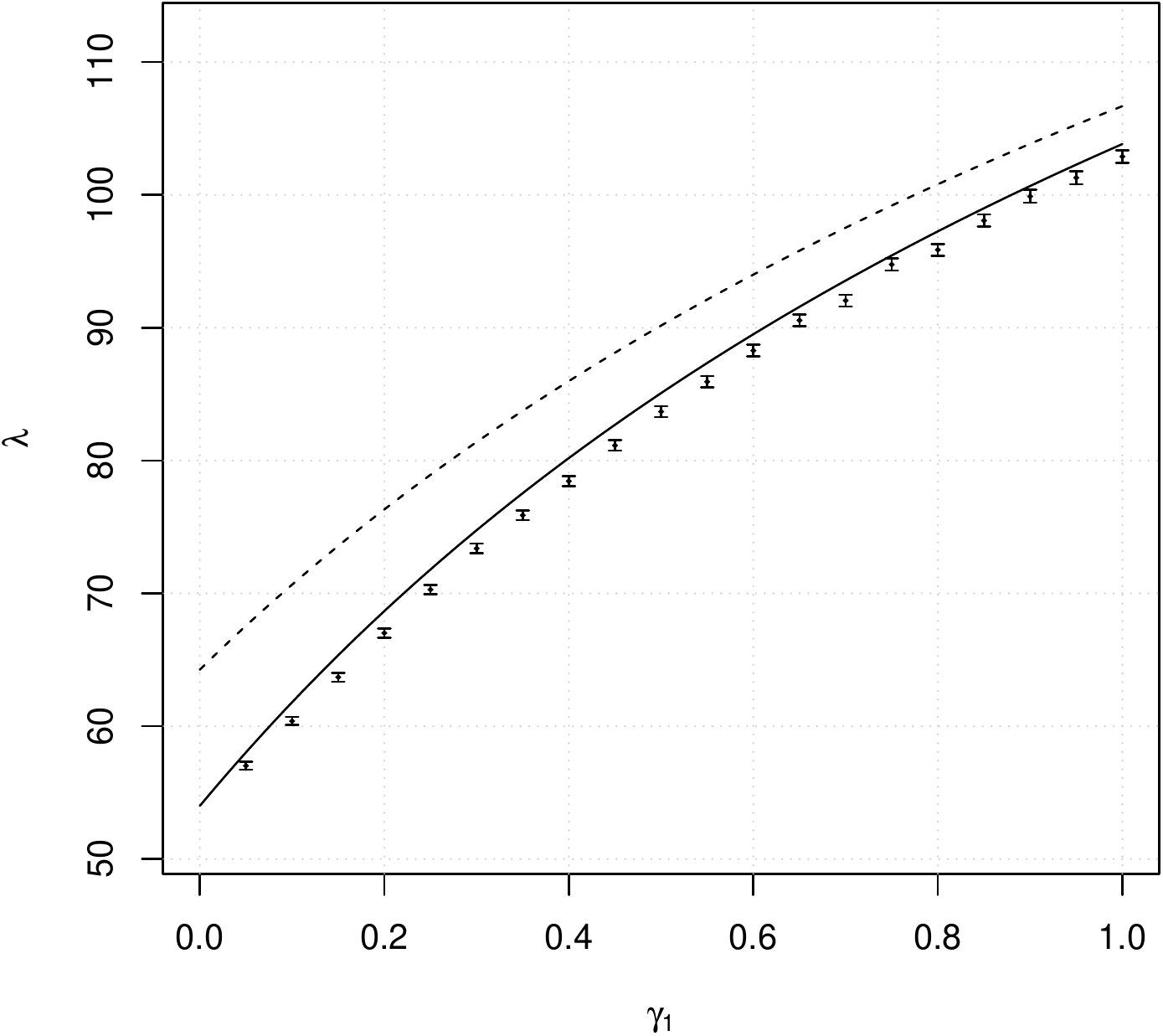}}
\subfigure[DG: $R=0.15$, $\beta=50$]{\includegraphics[scale=.4]{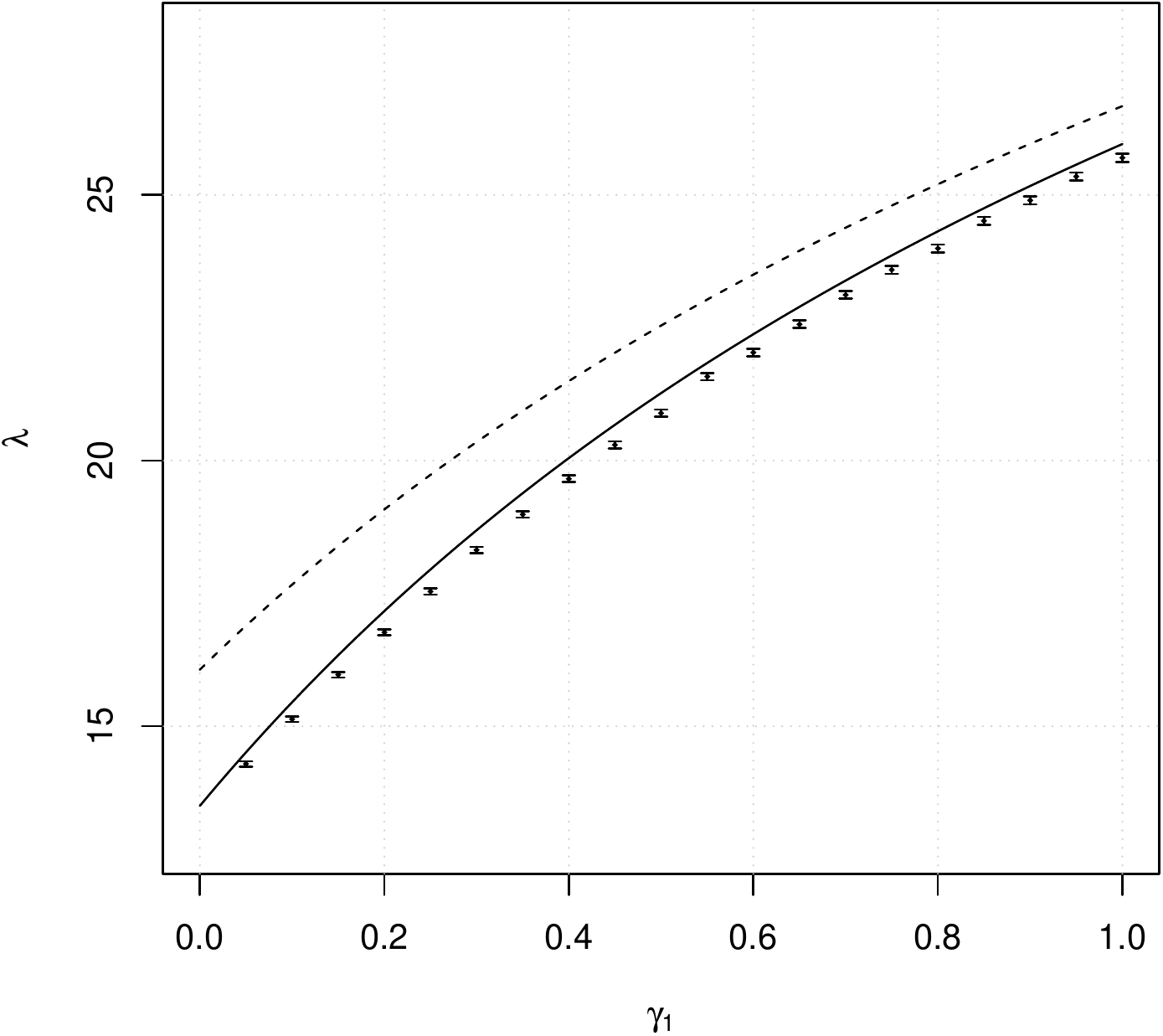}}
\caption{Comparison of the exact intensity (small boxplots obtained by Monte-Carlo method), the Poisson-saddlepoint approximation (dashed line) and the DPP approximation  (solid line) for Diggle-Graton models.  Curves and boxplots are reported in terms of the interaction parameter $\gamma_1 \in [0,1]$. \label{fig:dg} }
\end{figure}

\begin{figure}[h]
\subfigure[PS: $R=(0.05,0.1)$, $\beta=200$, $\gamma_2=0.5$]{\includegraphics[scale=.4]{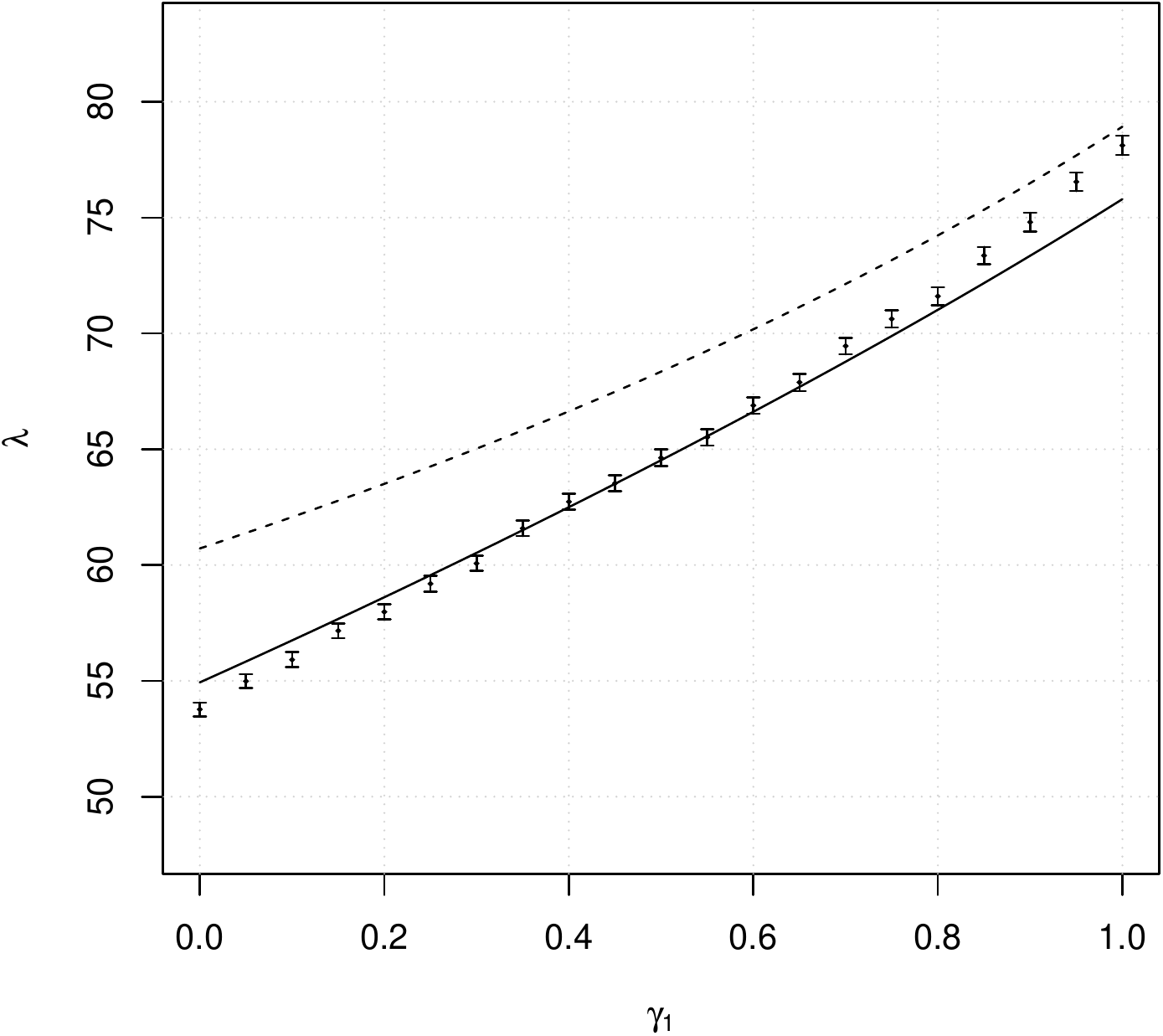}}
\subfigure[PSHC: $\delta=0.025$, $R=(0.05,0.1)$, $\beta=200$, $\gamma_2=0.5$]{\includegraphics[scale=.4]{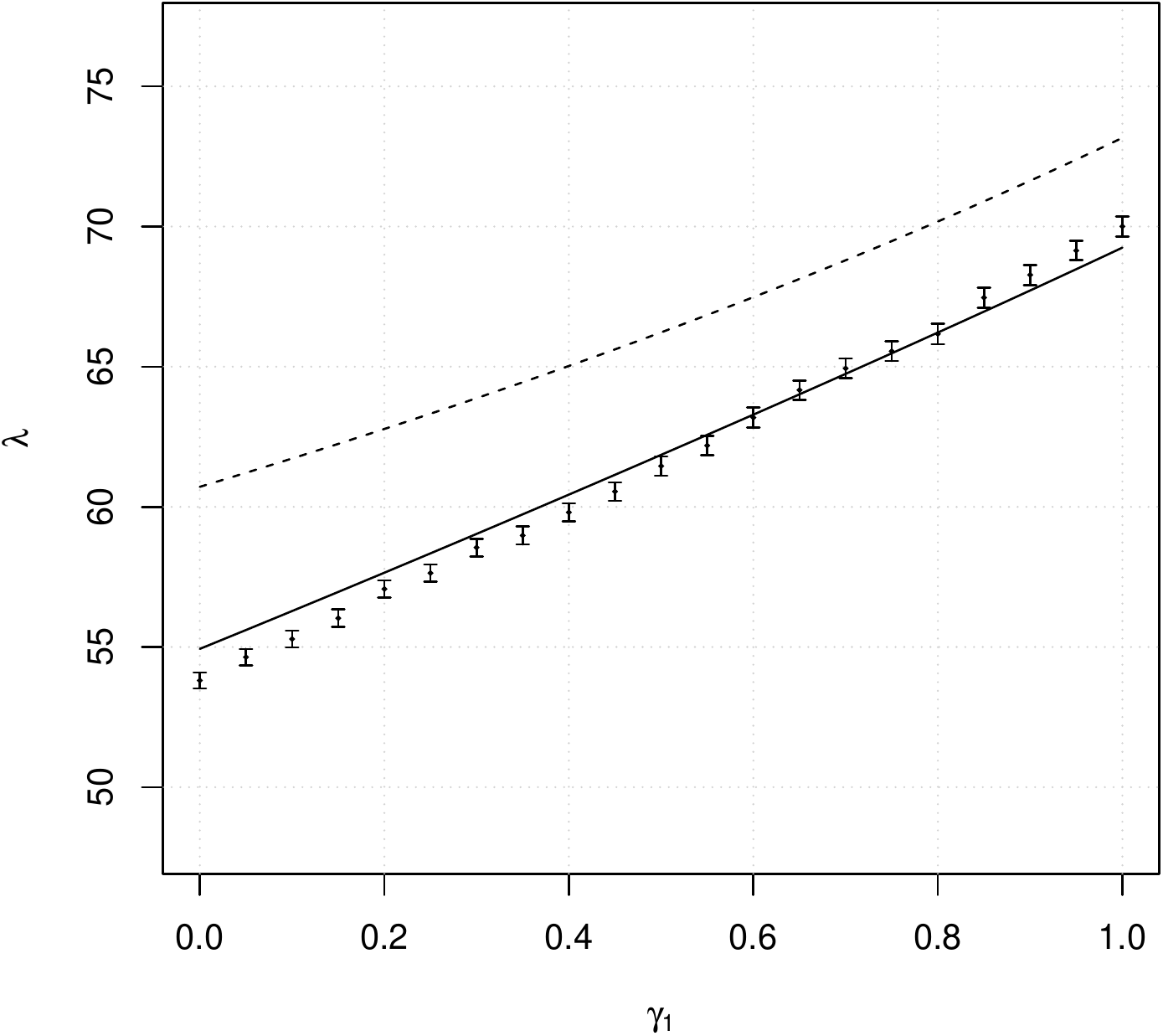}} \\
\subfigure[PS: $R=(0.05,0.1)$, $\beta=200$, $\gamma_2=0$]{\includegraphics[scale=.4]{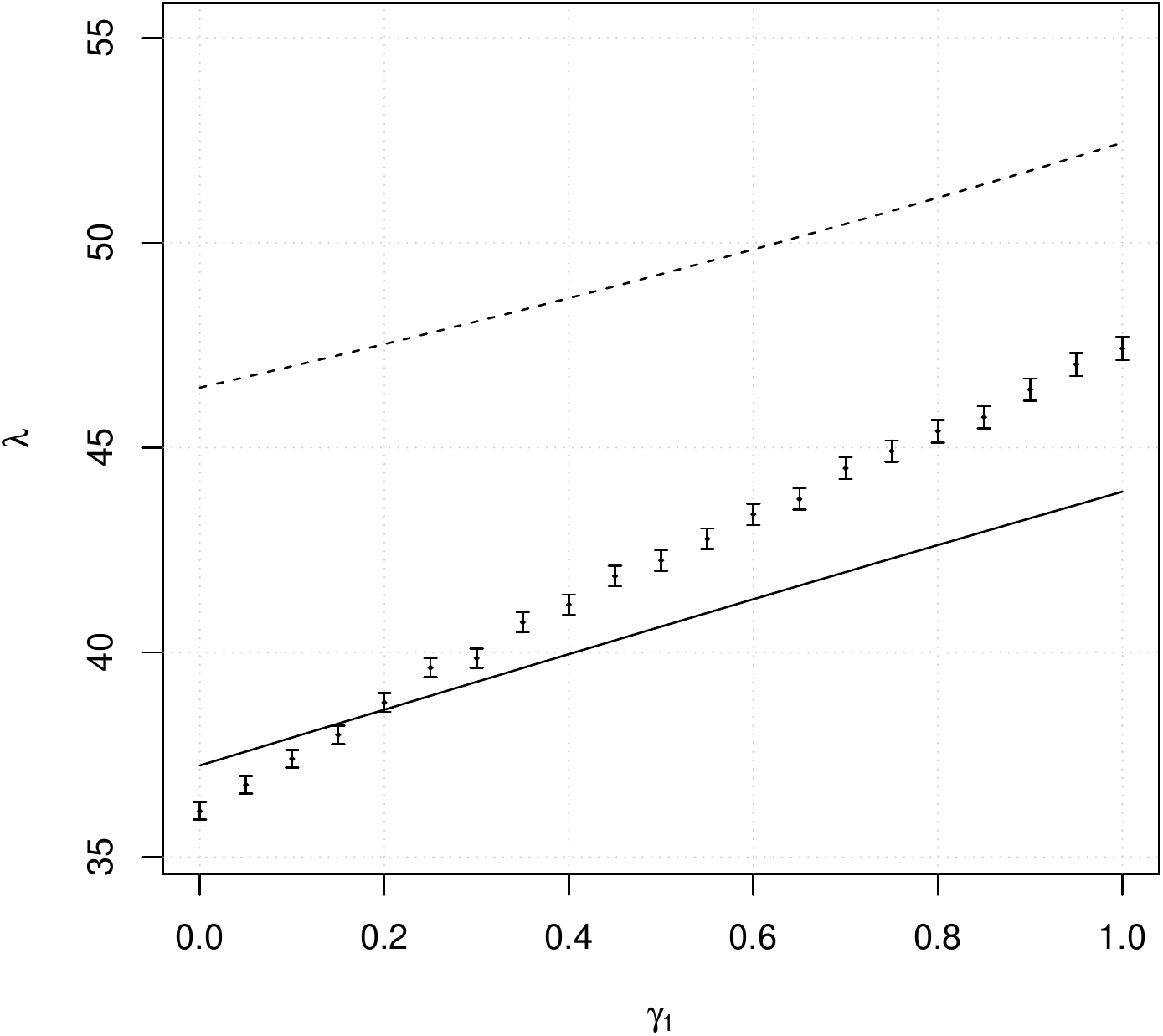}}
\subfigure[PSHC: $\delta=0.025$, $R=(0.05,0.1)$, $\beta=200$, $\gamma_2=0$]{\includegraphics[scale=.4]{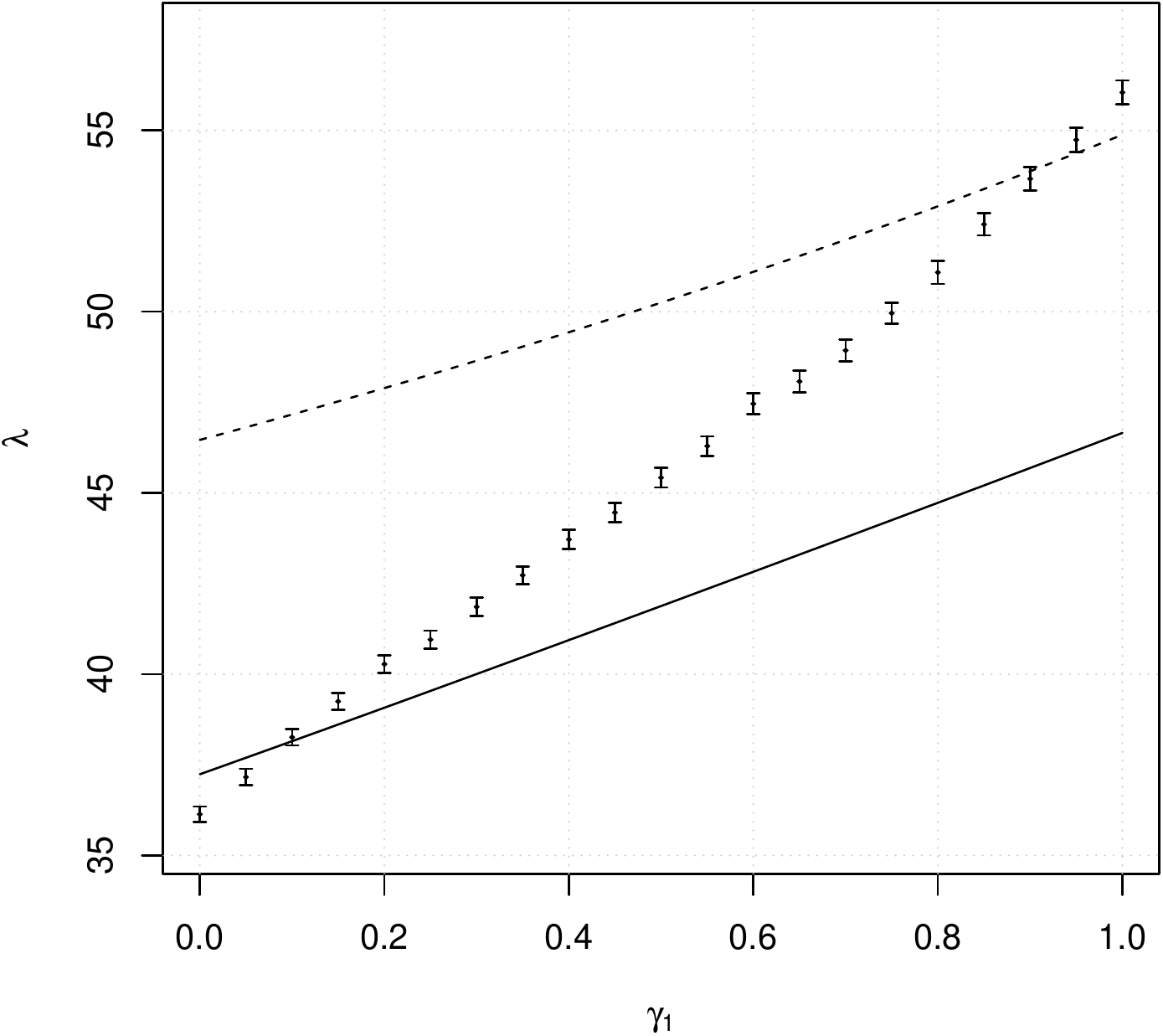}} 
\caption{Comparison of the exact intensity (small boxplots obtained by Monte-Carlo method), the Poisson-saddlepoint approximation (dashed line) and the DPP approximation  (solid line) for piecewise Strauss and piecewise Strauss hard-core models.  Curves and boxplots are reported in terms of the (remaining) interaction parameter $\gamma_1 \in [0,1]$.\label{fig:piecewise} }
\end{figure}

%\section{Conclusion}
%
%\begin{itemize}
%\item qu'est-ce qu'on a simplement fait?
%  \item à quoi ça sert ?
%  \item quelles sont les limitations de cette approche: purement répulsif, interaction par paires,...
%  \item on pourrait parler du contrôle de l'erreur obtenu par Stucki-Schuhmacher pour $\lPS$
%\end{itemize}
%
%
%
\section*{Acknowledgements} The authors are sincerely grateful to Adrian Baddeley and Gopalan Nair for sharing the Monte-carlo replications produced in~\citet{baddeley:nair:12} we used to compare the Poisson-saddlepoint approximation and the DPP approximation  (Figure~\ref{fig:strauss} (a),(b) and (d)). The research of J-F. Coeurjolly is supported by the Natural Sciences and Engineering Research Council of Canada.

\bibliographystyle{plainnat} 
\bibliography{approximation}

\end{document}